\theoremstyle{plain}
\newtheorem{theorem}{Theorem}
\newtheorem*{theorem*}{Theorem}
\newtheorem{lemma}[theorem]{Lemma}
\newtheorem*{lemma*}{Lemma}
\newtheorem{proposition}[theorem]{Proposition}
\newtheorem*{proposition*}{Proposition}
\newtheorem{corollary}[theorem]{Corollary}  
\newtheorem*{corollary*}{Corollary}
\theoremstyle{definition}
\newtheorem{definition}{Definition}
\newtheorem*{definition*}{Definition}
\newtheorem{example}{Example}
\newtheorem*{example*}{Example}
\theoremstyle{remark}
\newtheorem{remark}{Remark}
\newtheorem*{remark*}{Remark}
\newtheorem*{conjecture*}{Conjecture}
\newtheorem*{problem*}{Problem}
\newcommand*{\RR}{\mathbb{R}}
\newcommand*{\dd}{\mathrm{d}}
\DeclareMathOperator{\Id}{Id}
\newcommand*{\contr}[1]{\iota_{#1}}
\newcommand*{\liedv}[1]{\mathcal{L}_{#1}}
\newcommand*{\parder}[2]{\frac{\partial#1}{\partial #2}}
\newcommand*{\der}[2]{\frac{\dd #1}{\dd #2}}
\DeclareMathOperator{\Leg}{Leg}
\newcommand{\incl}{\iota}
\title{Symmetries, constants of the motion and reduction of mechanical systems with external forces}
\author[1,2]{Manuel de León}
\author[1]{Manuel Lainz}
\author[1]{Asier López-Gordón\thanks{Author to whom correspondence should be addressed: \href{mailto:asier.lopez.gordon@csic.es}{asier.lopez.gordon@csic.es}}
}
\affil[1]{Instituto de Ciencias Matemáticas (CSIC-UAM-UC3M-UCM) \protect\\
Calle Nicolás Cabrera, 13-15, Campus Cantoblanco, UAM, 28049 Madrid, Spain}
\affil[2]{Real Academia de Ciencias Exactas, Físicas y Naturales\protect\\
Calle Valverde, 22, 28004, Madrid, Spain }
\date{\today}
\begin{document}

\maketitle

\begin{abstract}
\noindent
This paper is devoted to the study of mechanical systems subjected to external forces in the framework of symplectic geometry. We obtain a Noether's theorem for Lagrangian systems with external forces, among other results regarding symmetries and conserved quantities. We particularize our results for the so-called Rayleigh dissipation, i.e., external forces that are derived from a dissipation function, and illustrate them with some examples. Moreover, we present a theory for the reduction of Lagrangian systems subjected to external forces which are invariant under the action of a Lie group.
\end{abstract}

\section{Introduction}
In this paper, we study the geometry and symmetries of Hamiltonian and Lagrangian systems with external forces, focusing on the so-called systems with Rayleigh dissipation. 
Mechanical systems with external forces are usual in Engineering \cite{cantrijn_82,cantrijn_84,cantrijn_02}, but also can arise in a more sophisticated manner, for instance, after a process of reduction of a nonholonomic system with symmetries \cite{cantrijn_02,cantrijin_et_al_99,cortes_99}. As it is well-known (see Refs.~\cite{godbillon_69,MdL_Rodrigues_89}), external forces can be regarded as semibasic 1-forms on the tangent or cotangent bundle. 
Our approach is based on the symplectic structure obtained from a regular Lagrangian in the Lagrangian formulation as well as the geometry of the tangent bundle. There are other ways to treat with symmetries, for instance a variational approach like in Ref.~\cite{bahar_87}.

The main result when we are in presence of symmetries is the celebrated Noether theorem. See Ref.~\cite{noether_71} for the original paper by E. Noether (see also Refs.~\cite{kosmann_11,neeman_99}). In our case, in spite of the existence of a non-conservative external force, we are able to extend Noether theorem and, furthermore, to obtain new conserved quantities. Our approach is just an appropriate modification of the well-known results for conservative mechanical systems (that means with no external forces) \cite{marmo_mukanda_86,djukic_vujanovic_75,sarlet_cantrijn_81,MdL_Rodrigues_89,carinena_94,carinena_lopez_martinez_89,carinena_martinez_89,ferrario_90,MdL_DMdD_94,de_leon_symmetries_1995,de_leon_symmetries_1996,lunev_90,marwat_07,prince_83,prince_85,roman-roy_20,sarlet_83,sarlet_cantrijn_81}. So, we first define point-base symmetries (that is, those provided by vector fields on the configuration manifold $Q$), and then symmetries on the tangent bundle. 

There are other approaches that can be found in the previous literature and have some relation with ours. For instance, Cantrijn \cite{cantrijn_82} considers Lagrangian systems that depend explicitly on time, and defines a 2-form on $\RR\times TQ$ that depends on the Poincaré-Cartan 2-form of the Lagrangian and the semibasic 1-form representing the external force.  Alternatively, van der Schaft \cite{schaft_83,schaft_81} considers a framework steming from system theory, in which an ``observation'' manifold appears together with the usual state space, and obtains a Noether's theorem for Hamiltonian system in this frame. Other approaches using variational tools can be found in Ref.~\cite{bahar_87}. However, in our approach no additional structure or objects are introduced besides the proper external force. 

The paper is organized as follows. In Sections \ref{section_Hamiltonian} and \ref{section_Lagrangian} we review Hamiltonian and Lagrangian systems with external forces, respectively. In Section \ref{section_morphisms} we cover the relation between fibre bundle morphisms and semibasic 1-forms. 
In Section \ref{section_conserved_Lagrangian}, we present some (as far as we know) original results concerning symmetries and constants of the motion for mechanical systems with external forces. 
In Section \ref{section_conserved_Hamiltonian} we study the symmetries and constants of the motion in the Hamiltonian framework. We relate these symmetries with the ones obtained for Lagrangian systems in the previous section.
In Section \ref{section_Rayleigh} we particularize the results of the previous section for the Rayleigh dissipation. Classically \cite{rayleigh_1871,goldstein_87,gantmakher_70}, only external forces that are linear on the velocities are regarded as examples of Rayleigh dissipation. However, following Lurie \cite{lurie_02} and Minguzzi \cite{minguzzi_15}, we consider a wider family of external forces as Rayleigh dissipation, namely forces that are derived from a dissipation function (which is not necessarily quadratic on the velocities). 
Finally, in Section \ref{section_reduction} we present a scheme for reduction in Lagrangian systems subjected to external forces which are invariant under the action of a Lie group.

\section{Hamiltonian systems subject to external\\ forces}\label{section_Hamiltonian}
An external force is geometrically interpreted as a semibasic 1-form on $T^*Q$. Let us recall \cite{MdL_Rodrigues_89,abraham_marsden_08,godbillon_69} that a 1-form $\gamma$ on $T^*Q$ is called \emph{semibasic} if 
\begin{equation}
    \gamma(Z)=0
\end{equation}
for all vertical vector fields $Z$.

\begin{remark}
    This definition can be extended to any fibre bundle $\pi: E\to M$. Indeed, a 1-form $\gamma$ on $E$ is called semibasic if
    \begin{equation}
        \gamma(Z)=0
    \end{equation}
    for all vertical vector fields $Z$ on $E$. If $(x^i,y^a)$ are fibred (bundle) coordinates, then the vertical vector fields are locally generated by $\{\partial/\partial y^a\}$. So $\gamma$ is a semibasic 1-form if it is locally written as
    \begin{equation}
        \gamma=\gamma_i(x,y) \dd x^i.
    \end{equation}
\end{remark}

A Hamiltonian system with external forces is given by a Hamiltonian function $H:T^*Q\to \mathbb{R}$ and a semibasic 1-form $\gamma$ on $T^*Q$. Let $\omega_Q=-\dd \alpha_Q$ be the canonical symplectic form of $T^*Q$.
Locally these objects can be written as

\begin{equation}
\begin{aligned}
&\alpha_Q=p_i\dd q^i,\\
&\omega_Q=\dd q^i\wedge \dd p_i,\\
&\gamma=\gamma_i(q,p) \dd q^i,\\
&H=H(q,p),
\end{aligned}
\end{equation}
where $(q^i,p_i)$ are bundle coordinates in $T^*Q$.

The dynamics of the system is given by the vector field $X_{H,\gamma}$, defined by
\begin{equation}
\contr{X_{H,\gamma}}\omega_Q=\dd H+\gamma.
\end{equation}
If $X_H$ is the Hamiltonian vector field for $H$, that is,
\begin{equation}
\contr{X_H}\omega_Q=\dd H, \label{Hamiltonian_VF}
\end{equation}
and $Z_\gamma$ is the vector field defined by
\begin{equation}
\contr{Z_\gamma}\omega_Q=\gamma,
\end{equation}
then we have
\begin{equation}
X_{H,\gamma}=X_H+Z_\gamma.
\end{equation}
Locally, the above equations can be written as 
\begin{equation}
\begin{aligned}
&X_H=\parder{H}{p_i}\parder{}{q^i}-\parder{H}{q^i}\parder{}{p_i} \label{Hamiltonian_VF_local},
\\
&\gamma=\gamma_i \dd q^i,\\
&Z_\gamma=-\gamma_i\parder{}{p_i},\\
&X_{H,\gamma}=\parder{H}{p_i}\parder{}{q^i}-\left(\parder{H}{q^i}+\gamma_i\right)\parder{}{p_i}.
\end{aligned}
\end{equation}
Then, a curve $(q^i(t), p_i (t)$ in $T^*Q$ is an integral curve of $X_{H,\gamma}$ if and only if it satisfies the forced motion equations
\begin{equation}
\begin{aligned}
&\der{q^i}{t}=\parder{H}{p_i},\\
&\der{p_i}{t}=-\left(\parder{H}{q^i}+\gamma_i\right).
\end{aligned}
\end{equation}
\section{Semibasic forms and fibred morphims}\label{section_morphisms}
Given a semibasic 1-form $\gamma$ on $TQ$, one can define the following morphism of fibre bundles \cite{MdL_Rodrigues_89,godbillon_69}:

\begin{equation}
\begin{aligned}
    &D_\gamma: TQ\to T^*Q,\\
    &\left\langle D_\gamma(v_q),w_q\right\rangle=\gamma(v_q)(u_{w_q}),
\end{aligned}
\end{equation}
for every $v_q,w_q\in T_qQ,\ u_{w_q}\in T_{w_q}(TQ)$, with $T\tau_Q(u_{w_q})=w_q$. In local coordinates, if
\begin{equation}
\gamma=\gamma_i(q,\dot q)\dd q^i,
\end{equation}
then
\begin{equation}
    D_\gamma(q^i,\dot q^i)=\left(q^i,\gamma_i(q^i,\dot{q}^i)\right).
\end{equation}
Here $(q^i, \dot{q}^i)$ are bundle coordinates in $TQ$.

Conversely, given a morphism of fibre bundles
\begin{center}
\begin{tikzcd}
D: TQ \arrow[rr] \arrow[rd, "\tau_q"'] &   & T^*Q \arrow[ld, "\pi_Q"] \\
                                                          & Q &                         
\end{tikzcd},
\end{center}
we define a semibasic 1-form $\gamma$ on $TQ$ by 
\begin{equation}
    \gamma_D(v_q)(u_{v_q})=\left\langle D(v_q),T\tau_Q(u_{v_q})\right\rangle,
\end{equation}
where $v_q\in T_qQ,\ u_{v_q}\in T_{v_q}(TQ)$.

If locally $D$ is given by
\begin{equation}
    D(q^i,\dot q^i)=(q^i,D_i(q,\dot q)),
\end{equation}
then
\begin{equation}
    \gamma_D=D_i(q,\dot q)\dd q^i.
\end{equation}
So there exists a one-to-one correspondence between semibasic 1-forms and fibred morphisms from $TQ$ to $T^*Q$.

\section{Lagrangian systems with external forces} \label{section_Lagrangian}
We shall now consider a Lagrangian system with Lagrangian function $L$ subjected to external forces. An external force is given by a semibasic 1-form $\beta$ on $TQ$. In bundle coordinates, we have
\begin{equation}
\beta=\beta_i(q,\dot{q})\dd q^i.
\end{equation}
If $L:TQ\to \RR$, then $\omega_L=-\dd \alpha_L$ is the \emph{Poincaré-Cartan 2-form}, where $\alpha_L=S^*(\dd L)$. Here, $S$ is the vertical endomorphism of $TQ$, which in local coordinates is given by
\begin{equation}\label{eq:canonical_endomorphism}
    S = \dd q^i \otimes \frac{\partial}{\partial \dot{q}^i},   
\end{equation}
hence,
\begin{equation}
    \omega_L = \dd q^i \wedge \dd \left(\frac{\partial L}{\partial \dot{q}^i}\right).
\end{equation}
Then, the dynamics is given by the vector field $\xi_{L,\beta}$ via the equation
\begin{equation}
\contr{\xi_{L,\beta}}\omega_L=\dd E_L+\beta, \label{dynamics_vector_Lagrangian}
\end{equation} 
where $E_L=\Delta(L)-L$ is the energy of the system and $\Delta$ is the \emph{Liouville vector field}: 
\begin{equation}
    \Delta = \dot{q}^i \frac{\partial}{\partial \dot{q}^i}.
\end{equation}
Here, we are assuming that $L$ is \emph{regular}, that is, the Hessian matrix
\begin{equation}
    (W_{ij})=\left(\frac{\partial^2 L}{\partial \dot{q}^i\partial \dot{q}^j}\right). \label{Hessian_Lagrangian}
\end{equation}
is invertible. It can be easily proven that $\omega_L$ is symplectic if and only if $L$ is regular \cite{MdL_Rodrigues_89}. Let $\xi_\beta$ be the vector field given by
\begin{equation}
\contr{\xi_{\beta}}\omega_L=\beta,
\end{equation}
and $\xi_L$ be the vector field given by
\begin{equation}
    \contr{\xi_L}\omega_L=\dd E_L, \label{Lagrangian_VF}
\end{equation}
then
\begin{equation}
\xi_{L,\beta}=\xi_L+\xi_\beta.
\end{equation}
We have
\begin{equation}
\xi_\beta=-\beta_iW^{ij}\parder{}{\dot{q}^j},
\end{equation}
where $(W^{ij})$ is the inverse matrix of $(W_{ij})$.
Then $\xi_{L,\beta}$ is a \emph{second order differential equation (SODE)}, meaning that,
\begin{equation}
S(\xi_{L,\beta})=S(\xi_L)=\Delta. \label{SODE}
\end{equation}
We know that
\begin{equation}
\xi_L=\dot{q}^i\parder{}{q^i}+\xi^i\parder{}{\dot{q}^i}, 
\end{equation}
where
\begin{equation}
\xi^i\parder{p_j}{\dot{q}^i}+\dot{q}^i\parder{p_j}{q^i}-\parder{L}{q^j}=0. \label{almost_EL}
\end{equation}
Then
\begin{equation}
\xi_{L,\beta}=\dot{q}^i\parder{}{q^i}+\left(\xi^i-\beta_j W^{ij}\right)\parder{}{\dot{q}^i}.
\end{equation}
Hence, a solution of $\xi_{L,\beta}$, $(q^i(t))$, satisfies
\begin{equation}
\begin{aligned}
&\der{q^i}{t}=\dot{q}^i,\\
&\der{\dot{q}^i}{t}=\xi^i-\beta_jW^{ji}.
\end{aligned}
\end{equation}
Therefore, from Eq.~\eqref{almost_EL}, we get
\begin{equation}
\ddot{q}^i\parder{p_j}{\dot{q}^i}+\dot{q}^i\parder{p_j}{q^i}-\parder{L}{q^j}+\beta_k W^{ki}\parder{p_j}{\dot{q}^i}=0.
\end{equation}
Since $p_j=\partial{L}/\partial \dot{q}^j$, the term $\partial p_j/\partial \dot{q}^i$ is equal to $W_{ji}$, and thus we finally obtain
\begin{equation}
\der{}{t}\left(\parder{L}{\dot{q}^i}\right)-\parder{L}{q^i}=-\beta_i. \label{Euler_Lagrange_eq}
\end{equation} 
If we construct the Legendre transform \cite{MdL_Rodrigues_89}
\begin{center}

\begin{tikzcd}
TQ \arrow[rr, "\Leg"] \arrow[rd, "\tau_q"'] &   & T^*Q \arrow[ld, "\pi_Q"] \\
                                                          & Q &                         
\end{tikzcd}
\end{center}
(and assume $L$ be \emph{hyperregular}, that is, $\Leg$ is a diffeomorphism), then we can define the external force $\gamma$ on $T^*Q$ by
\begin{equation}
\Leg^*\gamma=\beta.
\end{equation}
Thus $\xi_{L,\beta}$ and $X_{H,\gamma}$ are $\Leg$-related, that is,
$\Leg$ takes $\xi_{L,\beta}$ onto $X_{H,\gamma}$, where $H$ is defined by
\begin{equation}
H \circ \operatorname{Leg}=E_L. \label{Hamiltonian_Lagrangian}
\end{equation}

\begin{definition}
 In what follows, we will refer to the pair $(L, \beta)$ for a forced Lagrangian system given by a Lagrangian $L$ and a semibasic 1-form $\beta$. The corresponding vector field $\xi_{L,\beta}$, given by Eq.~\eqref{dynamics_vector_Lagrangian}, will be called \emph{forced Euler-Lagrange vector field}.
\end{definition}

\begin{remark}
    Take
    \begin{equation}
        \alpha_L=S^*(\dd L)=p_i \dd q^i,
    \end{equation}
    where $p_i = {\partial L}/{\partial \dot{q}^i}$,
    then $\alpha_L$ is a semibasic 1-form on $TQ$; and the corresponding fibred map is just the Legendre transform 
    \begin{equation}
        \Leg:TQ\to T^*Q.
    \end{equation}
\end{remark}

\section{Symmetries and constants of the motion in the Lagrangian description}\label{section_conserved_Lagrangian}

Let $f:TQ\to \RR$ be an arbitrary function and $\tau_Q:TQ\to Q$ the projection. Then the \emph{vertical lift} \cite{yano_ishihara_73,yano_almost_1967} of $f$ is a function $f^v:TQ\to Q$ given by
\begin{equation}
    f^v=f\circ \tau_Q.
\end{equation}
Any 1-form $\omega$ in $Q$ can be naturally regarded as a function on $TQ$, which we shall denote by $\iota \omega$.
If $X$ is a vector field on $Q$, its \emph{vertical lift} is the unique vector field $X^v$ on $TQ$ such that
\begin{equation}
    X^v(\iota \omega)= \left( \alpha(X)  \right)^v
\end{equation}
for every 1-form $\alpha$ on $Q$. The \emph{complete lift} of a function $f$ on $Q$ is the function $f^c$ on $TQ$ given by
\begin{equation}
    f^c=\iota (\dd f).
\end{equation}
The \emph{complete lift} of a vector field $X$ on $Q$ is the vector field $X^c$ on $TQ$ such that
\begin{equation}
    X^c (f^c) = \left( X(f)  \right)^c
\end{equation}
for every function $f$ on $Q$. If $X$ generates locally a 1-parameter group of transformations on $Q$, then $X^c$ generates the induced transformations on $TQ$ \cite{MdL_Rodrigues_89}.
Locally, if $X$ is given by
\begin{equation}
  X=X^i \frac{\partial  } {\partial q^i},
\end{equation}
then its \emph{vertical lift} is
\begin{equation}
   X^v= X^i \frac{\partial  } {\partial \dot{q}^i},
 \end{equation} 
and its \emph{complete lift} is 
\begin{equation}
  X^c= X^i \frac{\partial  } {\partial q^i}+\dot{q}^j \frac{\partial X^i} {\partial q^j} \frac{\partial  } {\partial \dot{q}^i}.
\end{equation}
Let $(L,\beta)$ be a Lagrangian system with Lagrangian function $L$ and external force $\beta$; denote by $\xi_{L,\beta}$ the corresponding forced Euler-Lagrange vector field. 
\begin{definition}
A function $f$ on $TQ$ is called a \emph{constant of the motion} (or a \emph{conserved quantity}) if $\xi_{L,\beta}(f)=0$.
\end{definition}

Suppose that, for a certain coordinate $q^i$, $\partial L/\partial q^i=\beta_i$.
 Then 
\begin{equation}
\der{}{t}\left(\parder{L}{\dot{q}^i}\right)=0,
\end{equation} 
and
$p_i=\partial L/\partial \dot{q}^i$ is a constant of the motion. This motivates the following theorem.

\begin{theorem}[Noether's theorem for dissipative systems]\label{Noether_th}
Let $X$ be a vector field on $Q$. Then $X^c(L)=\beta(X^c)$ if and only if $X^v(L)$ is a constant of the motion.
\end{theorem}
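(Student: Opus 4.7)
The plan is to compute $\xi_{L,\beta}(X^v(L))$ directly and show it equals $X^c(L) - \beta(X^c)$, from which both directions of the equivalence follow immediately.

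First I would rewrite $X^v(L)$ as a contraction with $\alpha_L$. Since $S(X^c) = X^v$ and $\alpha_L = S^*(\dd L)$, one has
\begin{equation}
X^v(L) = \dd L(S(X^c)) = \alpha_L(X^c) = \contr{X^c}\alpha_L,
\end{equation}
so that $\xi_{L,\beta}(X^v(L)) = \liedv{\xi_{L,\beta}}(\contr{X^c}\alpha_L)$. Using the identity $\liedv{Y}\contr{Z} - \contr{Z}\liedv{Y} = \contr{[Y,Z]}$, this decomposes as
\begin{equation}
\xi_{L,\beta}(X^v(L)) = \contr{[\xi_{L,\beta},X^c]}\alpha_L + \contr{X^c}\liedv{\xi_{L,\beta}}\alpha_L.
\end{equation}

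Next I would evaluate the second term via Cartan's formula. Since $S(\xi_{L,\beta}) = \Delta$ by the SODE property \eqref{SODE}, we get $\contr{\xi_{L,\beta}}\alpha_L = \dd L(\Delta) = \Delta(L)$, and together with Eq.~\eqref{dynamics_vector_Lagrangian} and $E_L = \Delta(L)-L$,
\begin{equation}
\liedv{\xi_{L,\beta}}\alpha_L = \dd(\Delta(L)) - \contr{\xi_{L,\beta}}\omega_L = \dd(\Delta(L)) - \dd E_L - \beta = \dd L - \beta.
\end{equation}
Contracting with $X^c$ gives $X^c(L) - \beta(X^c)$.

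It remains to show that the first term vanishes, i.e.\ that $[\xi_{L,\beta},X^c]$ is vertical, since $\alpha_L$ is semibasic. In bundle coordinates $\xi_{L,\beta} = \dot q^i\partial/\partial q^i + f^i\partial/\partial\dot q^i$ and $X^c = X^i\partial/\partial q^i + \dot q^j(\partial X^i/\partial q^j)\partial/\partial \dot q^i$; a direct bracket computation shows the horizontal coefficients are both $\dot q^j(\partial X^k/\partial q^j)$ and cancel, so $[\xi_{L,\beta},X^c]$ indeed has no $\partial/\partial q^i$ component. Combining, we obtain
\begin{equation}
\xi_{L,\beta}(X^v(L)) = X^c(L) - \beta(X^c),
\end{equation}
and the stated equivalence is immediate. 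The only nontrivial step is the verticality of the bracket, but this is trivial in coordinates; if one prefers a coordinate-free argument it follows from $X^c$ being $\tau_Q$-projectable onto $X$ together with the SODE identity $T\tau_Q\circ\xi_{L,\beta}=\mathrm{id}_{TQ}$, so the two horizontal contributions to the bracket both equal $X^c$ lifted along $X$ and cancel.
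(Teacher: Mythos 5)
Your proposal is correct and is essentially the paper's own argument: both establish the master identity $\xi_{L,\beta}(X^v L)=X^c(L)-\beta(X^c)$ from the same three ingredients ($\alpha_L(X^c)=X^v L$, $\alpha_L(\xi_{L,\beta})=\Delta L$ via the SODE property, and the verticality of $[\xi_{L,\beta},X^c]$ combined with $\alpha_L$ being semibasic), differing only in bookkeeping — you use $\liedv{\xi_{L,\beta}}\contr{X^c}\alpha_L$ plus Cartan's formula where the paper expands $\dd\alpha_L(\xi_{L,\beta},X^c)$ via the invariant formula. Your explicit coordinate check that the bracket is vertical fills in a step the paper merely asserts.
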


\begin{proof}
By Eq.~\eqref{dynamics_vector_Lagrangian}, we can write
\begin{equation}
  \begin{aligned}
  (\dd E_L+\beta)(X^c)=& \left(\contr{\xi_{L,\beta}} \omega_L  \right) (X^c)=-\dd \alpha_L(\xi_{L,\beta},X^c)\\
  =&-\xi_{L,\beta}(\alpha_L(X^c))+X^c(\alpha_L(\xi_{L,\beta}))+\alpha_L([\xi_{L,\beta},X^c]).
  \end{aligned}
\end{equation}
Now, since $\xi_{L,\beta}$ is a SODE, we have
\begin{equation}
\alpha_L(\xi_{L,\beta})=\contr{\xi_{L,\beta}}(S^*\dd L)=(S\xi_{L,\beta})L=\Delta L.  
\end{equation}
It is easy to see that $S X^c= X^v$. Moreover, $[\xi_{L,\beta},X^c]$ is a vertical vector field, and thus $S[\xi_{L,\beta},X^c]=0$. Then
\begin{equation}
  (\dd E_L+\beta)(X^c)=-\xi_{L,\beta}(X^v L)+X^c(\Delta L).
\end{equation}
On the other hand, we can write
\begin{equation}
  \dd E_L (X^c)=X^c(E_L)=X^c(\Delta L)-X^c(L).
\end{equation}
Combining these last two equations one deduces
\begin{equation}
  \xi_{L,\beta} (X^v L)=X^c (L) -\beta (X^c).
\end{equation}
In particular, the right-hand side vanishes if and only if the left-hand side does.
\end{proof}

\begin{definition} Consider the forced Lagrangian system $(L,\beta)$. Then
\begin{enumerate}[label=\roman*)]
\item A \emph{symmetry of the forced Lagrangian system} is a vector field $X$ on $Q$ such that $X^c(L)=\beta(X^c)$. 
\item A \emph{Lie symmetry} is a vector field $X$ on $Q$ such that $[X^c,\xi_{L,\beta}]=0$. 
\item A \emph{Noether symmetry} is a vector field $X$ on $Q$ such that $X^c(E_L)+\beta(X^c)=0$ and $\liedv{X^c}\alpha_L$ is exact.
\end{enumerate}
\end{definition}

\begin{proposition}\label{prop_Lie_sym}
 If $X$ is a vector field on $Q$ such that  
\begin{equation}
  \dd (\liedv{X^c}\alpha_L)=0,
\end{equation}
then $X$ is a Lie symmetry if and only if 
\begin{equation}
  \liedv{X^c}\beta= -\dd (X^c (E_L)).
\end{equation}
\end{proposition}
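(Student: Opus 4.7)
My plan is to exploit the defining equation $\contr{\xi_{L,\beta}} \omega_L = \dd E_L + \beta$ by taking the Lie derivative along $X^c$ on both sides, and then use the hypothesis together with the non-degeneracy of $\omega_L$.

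First, I would observe that the hypothesis $\dd(\liedv{X^c}\alpha_L) = 0$ immediately yields $\liedv{X^c}\omega_L = 0$. Indeed, since $\omega_L = -\dd \alpha_L$ and the exterior derivative commutes with the Lie derivative,
\begin{equation}
\liedv{X^c}\omega_L = -\liedv{X^c} \dd \alpha_L = -\dd \liedv{X^c}\alpha_L = 0.
\end{equation}
So under the hypothesis, $X^c$ is a (local) symplectic vector field for $\omega_L$.

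Next, I would apply $\liedv{X^c}$ to both sides of the equation $\contr{\xi_{L,\beta}}\omega_L = \dd E_L + \beta$. Using the standard identity $\liedv{X^c}\contr{Y} - \contr{Y}\liedv{X^c} = \contr{[X^c,Y]}$ on the left-hand side, together with $\liedv{X^c}\omega_L = 0$, I get
\begin{equation}
\liedv{X^c}(\contr{\xi_{L,\beta}}\omega_L) = \contr{[X^c,\xi_{L,\beta}]}\omega_L.
\end{equation}
Expanding the right-hand side via $\liedv{X^c}\dd E_L = \dd (X^c(E_L))$, I obtain
\begin{equation}
\contr{[X^c,\xi_{L,\beta}]}\omega_L = \dd(X^c(E_L)) + \liedv{X^c}\beta.
\end{equation}

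Finally, since $L$ is regular, $\omega_L$ is symplectic and in particular non-degenerate, so $[X^c,\xi_{L,\beta}] = 0$ if and only if the left-hand side vanishes, which happens exactly when $\liedv{X^c}\beta = -\dd (X^c(E_L))$. This gives the stated equivalence. The only mildly delicate point is handling the Lie-bracket-with-interior-product identity correctly, but this is a standard Cartan-calculus manipulation rather than a genuine obstacle.
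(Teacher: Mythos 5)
Your proof is correct and follows essentially the same route as the paper: both hinge on the identity $\liedv{X^c}\contr{\xi_{L,\beta}}-\contr{\xi_{L,\beta}}\liedv{X^c}=\contr{[X^c,\xi_{L,\beta}]}$ applied to $\omega_L$, the observation that the hypothesis forces $\liedv{X^c}\omega_L=-\dd(\liedv{X^c}\alpha_L)=0$, and the non-degeneracy of $\omega_L$ to conclude. No gaps.
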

\begin{proof}
  Indeed,
  \begin{equation}
    \begin{aligned}
     \contr{[X^c,\xi_{L,\beta}]}\omega_{L}=& \liedv{X^c} (\contr{\xi_{L,\beta}}\omega_{L})- \contr{\xi_{L,\beta}} ( \liedv{X^c}\omega_{L}) \\
     =& \liedv{X^c} (\dd E_L+\beta)+ \contr{\xi_{L,\beta}} \dd ( \liedv{X^c}\alpha_{L}) \\
     =&\dd (X^c(E_L))+\liedv{X^c}\beta.
    \end{aligned}
  \end{equation}
  Since $\omega_L$ is non-degenerate, then $[X^c,\xi_{L,\beta}]$ vanishes if and only if $\contr{[X^c,\xi_{L,\beta}]}\omega_{L}$ does.
\end{proof} 

 \begin{proposition}
 A Noether symmetry is a Lie symmetry if and only if 
 \begin{equation}
   \contr{X^c}\dd \beta=0.
 \end{equation}
 \end{proposition}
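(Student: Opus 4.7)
The plan is to reduce the Lie symmetry condition to an equation on $\beta$ alone by combining Proposition \ref{prop_Lie_sym} with the Noether hypothesis, and then unpack $\mathcal{L}_{X^c}\beta$ via Cartan's magic formula.

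First I would observe that the Noether hypothesis "$\mathcal{L}_{X^c}\alpha_L$ is exact" implies in particular that $\dd(\mathcal{L}_{X^c}\alpha_L)=0$, so Proposition \ref{prop_Lie_sym} applies. Consequently, for a Noether symmetry $X$, being a Lie symmetry is equivalent to
\begin{equation}
  \mathcal{L}_{X^c}\beta = -\dd\bigl(X^c(E_L)\bigr).
\end{equation}

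Next I would use the other part of the Noether condition, namely $X^c(E_L) = -\beta(X^c)$, to rewrite the right-hand side as $\dd(\beta(X^c)) = \dd(\iota_{X^c}\beta)$. Applying Cartan's formula to the left-hand side,
\begin{equation}
  \mathcal{L}_{X^c}\beta = \iota_{X^c}\dd\beta + \dd(\iota_{X^c}\beta),
\end{equation}
the equation above becomes
\begin{equation}
  \iota_{X^c}\dd\beta + \dd(\iota_{X^c}\beta) = \dd(\iota_{X^c}\beta),
\end{equation}
which simplifies to $\iota_{X^c}\dd\beta = 0$. Both implications are read off from this chain of equivalences.

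There is no real obstacle here: the only subtlety is remembering that "exact" in the Noether definition is used only through the weaker consequence "closed", which is exactly what Proposition \ref{prop_Lie_sym} requires, and that the Noether energy condition is needed precisely to cancel the $\dd(\iota_{X^c}\beta)$ term produced by Cartan's formula.
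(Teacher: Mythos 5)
Your proposal is correct and follows essentially the same route as the paper's proof: both invoke Proposition \ref{prop_Lie_sym} via the observation that exactness of $\liedv{X^c}\alpha_L$ implies closedness, and both apply Cartan's formula to $\liedv{X^c}\beta$ together with the Noether energy condition $X^c(E_L)+\beta(X^c)=0$ to cancel the $\dd(\contr{X^c}\beta)$ term and isolate $\contr{X^c}\dd\beta$. Nothing is missing.
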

 \begin{proof}
    Since $\liedv{X^c}\alpha_L$ is exact, it can be written as $\liedv{X^c}\alpha_L=\dd f$ for some function $f:TQ\to \RR$.
    Obviously, $\dd(\liedv{X^c}{\alpha_L})=\dd (\dd f)=0$. In addition,
    \begin{equation}
    \begin{aligned}
      \liedv{X^c} \beta&=\contr{X^c}(\dd \beta)+\dd (\contr{X^c}\beta)
      =\contr{X^c}(\dd \beta)+\dd(\beta(X^c))\\
      &=\contr{X^c}(\dd \beta)-\dd(X^c(E_L)).
      \end{aligned}
    \end{equation}
    By Proposition~\ref{prop_Lie_sym}, the result holds.
 \end{proof}

\begin{proposition}
Let $X$ be a vector field on $TQ$ such that
\begin{equation}
  \liedv{X^c}\alpha_L=\dd f,
\end{equation}
then $X$ is a Noether symmetry if and only if $f-X^v(L)$ is a conserved quantity. 
\end{proposition}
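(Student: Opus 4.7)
The plan is to compute $\xi_{L,\beta}(f)$ directly by contracting the hypothesis $\liedv{X^c}\alpha_L = \dd f$ with $\xi_{L,\beta}$, and to rearrange the result so that the quantity $X^c(E_L) + \beta(X^c)$ appears on one side and $\xi_{L,\beta}(f - X^v(L))$ on the other.

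First I would note the auxiliary identity $\alpha_L(X^c) = X^v(L)$, which follows from $\alpha_L = S^*(\dd L)$ together with $S X^c = X^v$ (both already used in the proof of Theorem~\ref{Noether_th}). Then, applying Cartan's magic formula to the hypothesis and recalling that $\dd \alpha_L = -\omega_L$, I would write
\begin{equation}
    \dd f = \liedv{X^c}\alpha_L = \contr{X^c}\dd\alpha_L + \dd\,\contr{X^c}\alpha_L = -\contr{X^c}\omega_L + \dd(X^v(L)).
\end{equation}

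Next, I would contract both sides with $\xi_{L,\beta}$. On the left one gets $\xi_{L,\beta}(f)$. For the first term on the right,
\begin{equation}
    -\contr{\xi_{L,\beta}}\contr{X^c}\omega_L = -\omega_L(X^c,\xi_{L,\beta}) = (\contr{\xi_{L,\beta}}\omega_L)(X^c) = (\dd E_L + \beta)(X^c) = X^c(E_L) + \beta(X^c),
\end{equation}
using the defining equation \eqref{dynamics_vector_Lagrangian} of $\xi_{L,\beta}$. The second term gives $\xi_{L,\beta}(X^v(L))$. Putting the pieces together yields
\begin{equation}
    \xi_{L,\beta}(f - X^v(L)) = X^c(E_L) + \beta(X^c),
\end{equation}
from which the equivalence is immediate: the left-hand side vanishes (i.e., $f - X^v(L)$ is a conserved quantity) if and only if the right-hand side does, which is precisely the remaining condition for $X$ to be a Noether symmetry once $\liedv{X^c}\alpha_L$ is already assumed exact.

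There is no real obstacle here; the argument is essentially Cartan's formula plus the dynamical equation. The only point requiring a little care is the bookkeeping of signs in the contraction $\contr{\xi_{L,\beta}}\contr{X^c}\omega_L$ and the recognition that $\alpha_L(X^c) = X^v(L)$, both of which are routine given the identities already collected earlier in the paper.
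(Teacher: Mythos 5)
Your argument is correct and is essentially identical to the paper's own proof: both expand $\liedv{X^c}\alpha_L$ via Cartan's formula, identify $\contr{X^c}\alpha_L = X^v(L)$ through $S X^c = X^v$, and contract with $\xi_{L,\beta}$ to obtain $\xi_{L,\beta}(f - X^v(L)) = X^c(E_L) + \beta(X^c)$. The sign bookkeeping in $-\contr{\xi_{L,\beta}}\contr{X^c}\omega_L = (\dd E_L + \beta)(X^c)$ is handled correctly, so nothing is missing.
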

\begin{proof}
Indeed, 
\begin{equation}
\begin{aligned}
  \dd f&= \liedv{X^c}\alpha_L=\contr{X^c}(\dd \alpha_L)+\dd (\contr{X^c}\alpha_L)=\contr{X^c}(\dd \alpha_L)+\dd (\contr{X^c}S^* \dd L)
  \\
  &=\contr{X^c}(\dd \alpha_L)+\dd (\contr{SX^c}\dd L)=\contr{X^c}(\dd \alpha_L)+\dd (X^v L),
\end{aligned}
\end{equation}
so
\begin{equation}
  \contr{\xi_{L,\beta}}\contr{X^c}(\dd \alpha_L)=\contr{\xi_{L,\beta}}(\dd (f-X^vL))=\xi_{L,\beta}(f-X^v(L)),
\end{equation}
but
\begin{equation}
  \contr{\xi_{L,\beta}}\contr{X^c}\dd \alpha_L=
  \contr{X^c}\contr{\xi_{L,\beta}}\omega_L=\contr{X^c}(\dd E_L+\beta)=X^c(E_L)+\beta(X^c),
\end{equation}
and the result holds.
\end{proof}

Observe that this last proposition is a generalisation of Theorem \ref{Noether_th}. In other words, every symmetry of the forced Lagrangian system is a Noether symmetry. In fact, if $f$ is a constant function, clearly $X^v L$ is a conserved quantity. Moreover,
\begin{equation}
  \liedv{X^c}\alpha_L=0,
\end{equation}
so
\begin{equation}
\begin{aligned}
  0&=(\liedv{X^c}\alpha_L)(\xi_{L,\beta})=X^c\left(\alpha_L(\xi_{L,\beta})\right)-\alpha_L ([X^c,\xi_{L,\beta}])\\
  &=X^c(\Delta L)-S[X^c,\xi_{L,\beta}]L=X^c(\Delta L),
  \end{aligned}
\end{equation}
and thus,
\begin{equation}
  0=X^c(E_L)+\beta(X^c)=X^c(\Delta L) -X^c(L)+\beta(X^c)=-X^c(L)+\beta(X^c).
\end{equation}

\begin{remark}
A Noether symmetry is a symmetry of the forced Lagrangian system if and only if $\liedv{X^c}\alpha_L=0$. \label{Noether_L_symmetries}
\end{remark}

We have just discussed infinitesimal symmetries on $Q$, the so-called point-like symmetries \cite{MdL_DMdD_94}. We shall now cover symmetries which are not necessarily point-like, that is, vector fields on $TQ$.

\begin{definition}
A \emph{dynamical symmetry} of $\xi_{L,\beta}$ is a vector field $\tilde{X}$ on $TQ$ such that $[\tilde{X},\xi_{L,\beta}]=0$. A \emph{Cartan symmetry} is a vector field $\tilde{X}$ on $TQ$ such that $\tilde{X}(E_L)+\beta(\tilde{X})=0$
and $\liedv{\tilde{X}}\alpha_L=\dd f$.
\end{definition}
\begin{remark}
Let $X$ be a vector field on $Q$. Then
\begin{enumerate}[label=\roman*)]
 \item $X$ is a Lie symmetry if and only if $X^c$ is a dynamical symmetry.
 \item  $X$ is a Noether symmetry if and only if $X^c$ is a Cartan symmetry.
 \end{enumerate} 
\end{remark}

\begin{proposition}\label{prop_dynamical_sym}
If $\tilde X$ is a vector field on $TQ$ such that  
\begin{equation}
  \dd (\liedv{\tilde{X}}\alpha_L)=0,
\end{equation}
then $\tilde{X}$ is a dynamical symmetry if and only if
\begin{equation}
  \dd (\tilde{X} (E_L))=-\liedv{\tilde{X}}\beta.
\end{equation}
\end{proposition}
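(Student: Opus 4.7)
The plan is to mimic the proof of Proposition \ref{prop_Lie_sym} almost verbatim, since that argument never used the specific structure of $X^c$ beyond the fact that it was a vector field on $TQ$ — the only ingredients were the Cartan magic formula, the defining equation of $\xi_{L,\beta}$, the identity $\omega_L=-\dd\alpha_L$, and the nondegeneracy of $\omega_L$. All of these remain available for an arbitrary $\tilde X$ on $TQ$.

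Concretely, I would start from the standard commutator identity
\begin{equation}
\contr{[\tilde X,\xi_{L,\beta}]}\omega_L = \liedv{\tilde X}(\contr{\xi_{L,\beta}}\omega_L) - \contr{\xi_{L,\beta}}(\liedv{\tilde X}\omega_L).
\end{equation}
For the first term, I would substitute the defining equation \eqref{dynamics_vector_Lagrangian} and use that Lie derivatives commute with $\dd$ to get $\liedv{\tilde X}(\dd E_L+\beta)=\dd(\tilde X(E_L))+\liedv{\tilde X}\beta$. For the second term, the hypothesis $\dd(\liedv{\tilde X}\alpha_L)=0$ together with $\omega_L=-\dd\alpha_L$ gives
\begin{equation}
\liedv{\tilde X}\omega_L = -\liedv{\tilde X}\dd\alpha_L = -\dd(\liedv{\tilde X}\alpha_L)=0,
\end{equation}
so this term vanishes. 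Combining both,
\begin{equation}
\contr{[\tilde X,\xi_{L,\beta}]}\omega_L = \dd(\tilde X(E_L))+\liedv{\tilde X}\beta.
\end{equation}

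To finish, I would invoke nondegeneracy of $\omega_L$ (which holds since $L$ is regular): $[\tilde X,\xi_{L,\beta}]=0$ if and only if the left-hand side vanishes, which happens precisely when $\dd(\tilde X(E_L))=-\liedv{\tilde X}\beta$. There is no real obstacle here — the only thing worth flagging is that the hypothesis is stated on $\alpha_L$ rather than $\omega_L$, but since the exterior derivative commutes with the Lie derivative the passage between the two is immediate, and this is the step that makes the whole argument go through.
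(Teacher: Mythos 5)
Your proof is correct and is exactly the argument the paper intends: the paper explicitly states that the proofs of the $TQ$-versions are ``completely analogous to those for point-like symmetries,'' and your computation is the proof of Proposition~\ref{prop_Lie_sym} with $X^c$ replaced by $\tilde{X}$, which goes through verbatim since no special property of complete lifts was used there.
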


\begin{proposition}
A Cartan symmetry is a dynamical symmetry if and only if
\begin{equation}
  \contr{\tilde{X}}\dd \beta=0.
\end{equation}
\end{proposition}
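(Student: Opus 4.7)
The plan is to reduce the statement to Proposition \ref{prop_dynamical_sym} and then simplify using the defining identity of a Cartan symmetry together with Cartan's magic formula.

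First I would observe that if $\tilde{X}$ is a Cartan symmetry, then by definition $\liedv{\tilde{X}}\alpha_L = \dd f$ for some $f\colon TQ\to\RR$; in particular $\dd(\liedv{\tilde{X}}\alpha_L) = \dd^2 f = 0$. This is exactly the hypothesis of Proposition~\ref{prop_dynamical_sym}, so that proposition becomes available and tells us that, under the Cartan assumption,
\begin{equation}
  [\tilde{X},\xi_{L,\beta}]=0 \iff \dd(\tilde{X}(E_L)) = -\liedv{\tilde{X}}\beta.
\end{equation}

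Next I would rewrite the right-hand side using Cartan's magic formula:
\begin{equation}
  \liedv{\tilde{X}}\beta = \contr{\tilde{X}}\dd\beta + \dd(\contr{\tilde{X}}\beta) = \contr{\tilde{X}}\dd\beta + \dd(\beta(\tilde{X})).
\end{equation}
Now I would insert the other half of the Cartan symmetry condition, $\tilde{X}(E_L) + \beta(\tilde{X}) = 0$, which, after taking exterior derivatives, yields $\dd(\beta(\tilde{X})) = -\dd(\tilde{X}(E_L))$. Substituting back,
\begin{equation}
  \liedv{\tilde{X}}\beta = \contr{\tilde{X}}\dd\beta - \dd(\tilde{X}(E_L)),
\end{equation}
so the condition $\dd(\tilde{X}(E_L)) = -\liedv{\tilde{X}}\beta$ from Proposition~\ref{prop_dynamical_sym} is equivalent to $\contr{\tilde{X}}\dd\beta = 0$. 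Combining the two equivalences gives the claim.

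I do not expect any real obstacle here: the argument is purely formal, consisting of one application of Proposition~\ref{prop_dynamical_sym} and one use of Cartan's formula. The only thing to be slightly careful about is the sign when passing from $\beta(\tilde X) = -\tilde X(E_L)$ to its differential, but this is routine.
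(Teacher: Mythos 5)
Your argument is correct and is exactly the one the paper intends: it mirrors, step for step, the proof given for the point-like version (reduce to Proposition~\ref{prop_dynamical_sym} via $\dd(\liedv{\tilde{X}}\alpha_L)=\dd^2 f=0$, then apply Cartan's formula and the condition $\tilde{X}(E_L)+\beta(\tilde{X})=0$), which is why the paper only remarks that the proof is ``completely analogous.'' The sign bookkeeping is also right, so there is nothing to add.
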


\begin{proposition}
Let $\tilde X$ be a vector field on $TQ$ such that  
\begin{equation}
  \liedv{\tilde{X}}\alpha_L=\dd f.
\end{equation}
Then $\tilde{X}$ is a Cartan symmetry if and only if $f-(S\tilde{X})L$ is a constant of the motion.
\end{proposition}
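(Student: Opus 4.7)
The plan is to imitate the proof of the earlier $X^c$-version almost verbatim, after noting the single algebraic identity that makes the jump from $X^c$ to an arbitrary $\tilde X$ on $TQ$ transparent. For a vector field $X$ on $Q$ one has $SX^c = X^v$, so $X^v L = (S X^c)L$; replacing $X^c$ by $\tilde X$ just replaces $X^v L$ by $(S\tilde X)L$. Crucially, the SODE property of $\xi_{L,\beta}$ is not used here — only the identity $\contr{Y}\alpha_L = (SY)L$, valid for any vector field $Y$ on $TQ$, because $\alpha_L = S^\ast\dd L$.

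Step 1: unpack the hypothesis using Cartan's magic formula,
\begin{equation}
\dd f = \liedv{\tilde X}\alpha_L = \contr{\tilde X}\dd\alpha_L + \dd(\contr{\tilde X}\alpha_L) = \contr{\tilde X}\dd\alpha_L + \dd\bigl((S\tilde X)L\bigr),
\end{equation}
which rearranges to $\contr{\tilde X}\dd\alpha_L = \dd\bigl(f - (S\tilde X)L\bigr)$.

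Step 2: contract both sides with $\xi_{L,\beta}$. The right side yields directly
\begin{equation}
\contr{\xi_{L,\beta}}\dd\bigl(f - (S\tilde X)L\bigr) = \xi_{L,\beta}\bigl(f - (S\tilde X)L\bigr).
\end{equation}
For the left side, using $\dd\alpha_L = -\omega_L$ and the antisymmetry $\contr{\xi_{L,\beta}}\contr{\tilde X} = -\contr{\tilde X}\contr{\xi_{L,\beta}}$ on 2-forms, together with the defining equation $\contr{\xi_{L,\beta}}\omega_L = \dd E_L + \beta$,
\begin{equation}
\contr{\xi_{L,\beta}}\contr{\tilde X}\dd\alpha_L = \contr{\tilde X}\contr{\xi_{L,\beta}}\omega_L = \contr{\tilde X}(\dd E_L+\beta) = \tilde X(E_L) + \beta(\tilde X).
\end{equation}

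Step 3: combine the two computations to obtain the master identity
\begin{equation}
\xi_{L,\beta}\bigl(f - (S\tilde X)L\bigr) = \tilde X(E_L) + \beta(\tilde X),
\end{equation}
from which both directions follow at once: $\tilde X$ is a Cartan symmetry precisely when the right-hand side vanishes, which is exactly the condition that $f - (S\tilde X)L$ be a constant of the motion.

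There is no real obstacle here; the only thing to watch is keeping the signs of $\omega_L = -\dd\alpha_L$ and the interchange $\contr{\xi_{L,\beta}}\contr{\tilde X}\omega_L = -\contr{\tilde X}\contr{\xi_{L,\beta}}\omega_L$ consistent, so that the two expressions for $\contr{\xi_{L,\beta}}\contr{\tilde X}\dd\alpha_L$ match. Essentially, the proof is the earlier argument with $X^c \leadsto \tilde X$ and $X^v L \leadsto (S\tilde X)L$, and it makes clear why the statement for $\tilde X$ on $TQ$ is a genuine generalisation of the point-symmetry case.
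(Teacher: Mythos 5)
Your proof is correct and is exactly the argument the paper intends: it reproduces the proof of the point-like version (the proposition on Noether symmetries and $f-X^v(L)$) with $X^c\leadsto\tilde X$ and $X^vL\leadsto(S\tilde X)L$, which is precisely what the paper means by ``the proofs are completely analogous.'' Your remark that only $\contr{Y}\alpha_L=(SY)L$ is needed, and not the SODE property, correctly identifies why the generalisation from $X^c$ to arbitrary $\tilde X$ goes through.
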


The proofs are completely analogous to those for point-like symmetries. Notice that Theorem \ref{Noether_th} cannot be generalised for symmetries on $TQ$, since $[\xi_{L,\beta},\tilde{X}]$ is not a vertical vector field for a general $\tilde{X}$ on $TQ$.

\section{Symmetries and constants of the motion in the Hamiltonian description} \label{section_conserved_Hamiltonian}

Let $\alpha$ and $\hat{X}$ be a 1-form and a vector field on $T^*Q$, respectively. We say that $\alpha$ is a \emph{first integral} of $\hat{X}$ if $\alpha(\hat{X})=0$. Similarly, a function $F$ on $T^*Q$ is called a \emph{first integral} of $\hat{X}$ if $\dd F(\hat{X})=\hat{X}(F)=0$.

Let $(H,\gamma)$ be a Hamiltonian system with Hamiltonian function $H$ and external force $\gamma$. Let $X_{H,\gamma}$ be the corresponding Hamiltonian vector field. A first integral of $X_{H,\gamma}$ is called a \emph{constant of the motion} or a \emph{conserved quantity}.


 Let $F$ and $G$ be two functions on $T^*Q$. Let $X_F$ and $X_G$ be their corresponding Hamiltonian vector fields, namely $\contr{X_F}\omega_Q=\dd F$ and $\contr{X_G}\omega_Q=\dd G$.
The \emph{Poisson bracket} of $F$ and $G$ is given by
\begin{equation}
  \left\{F,G  \right\}=  \omega_Q(X_F,X_{G}).
\end{equation}
Let $\alpha$ and $\beta$ be 1-forms on $T^*Q$, with $X_\alpha$ and $X_\beta$ their corresponding Hamiltonian vector fields. Then their Poisson bracket is defined as
\begin{equation}
  \left\{\alpha,\beta  \right\}=-\contr{[X_\alpha,X_\beta]}\omega_Q.
\end{equation}
Clearly,
\begin{equation}
\begin{aligned}
  X_{H,\gamma}(F)&=\contr{X_{H,\gamma}}\dd F= \contr{X_{H,\gamma}}( \contr{X_F}\omega_Q)=-\contr{X_F}(\contr{X_{H,\gamma}}\omega_Q)\\
  &=-\omega_Q(X_H,X_F)-\gamma(X_F)= \left\{F,H  \right\}-\gamma(X_F),
\end{aligned}
\end{equation}
and hence $F$ is a constant of the motion if and only if
\begin{equation}
  \left\{F,H  \right\}=\gamma(X_F).
\end{equation}

\begin{proposition} 
If $\hat{X}$ is a vector field on $T^*Q$ such that $\liedv{\hat{X}}\alpha_Q$ is closed, then $\hat{X}$ commutes with $X_{H,\gamma}$ if and only if
\begin{equation}
  \dd (\hat{X} (H))=-\liedv{\hat{X}}\gamma.
\end{equation}
\end{proposition}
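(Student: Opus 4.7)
The plan is to mirror the proof of Proposition~\ref{prop_Lie_sym} (and its $TQ$-analogue, Proposition~\ref{prop_dynamical_sym}) on the cotangent side, using the defining equation $\contr{X_{H,\gamma}}\omega_Q = \dd H + \gamma$ together with the standard Cartan-calculus identity
\begin{equation}
  \contr{[\hat{X},Y]} \omega = \liedv{\hat{X}}(\contr{Y} \omega) - \contr{Y}(\liedv{\hat{X}} \omega),
\end{equation}
which I would apply with $Y = X_{H,\gamma}$ and $\omega = \omega_Q$. This will reduce the commutation question to a single equation of 1-forms on $T^*Q$, which can then be inverted thanks to the non-degeneracy of $\omega_Q$.

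First I would expand $\liedv{\hat{X}}(\contr{X_{H,\gamma}}\omega_Q) = \liedv{\hat{X}}(\dd H + \gamma) = \dd(\hat{X}(H)) + \liedv{\hat{X}} \gamma$, using that Lie derivative commutes with $\dd$ and that $\liedv{\hat{X}} \dd H = \dd(\hat{X}(H))$. Next I would handle the term $\contr{X_{H,\gamma}}(\liedv{\hat{X}}\omega_Q)$: since $\omega_Q = -\dd \alpha_Q$, commutation of $\liedv{\hat{X}}$ and $\dd$ gives $\liedv{\hat{X}}\omega_Q = -\dd(\liedv{\hat{X}} \alpha_Q)$, which vanishes by the hypothesis that $\liedv{\hat{X}}\alpha_Q$ is closed. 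Substituting these two computations back yields
\begin{equation}
  \contr{[\hat{X}, X_{H,\gamma}]} \omega_Q = \dd(\hat{X}(H)) + \liedv{\hat{X}} \gamma.
\end{equation}

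Finally, since $\omega_Q$ is non-degenerate, the vector field $[\hat{X}, X_{H,\gamma}]$ vanishes if and only if its contraction with $\omega_Q$ does, and the stated equivalence follows immediately. No real obstacle is expected; the only subtle point is to invoke the closedness of $\liedv{\hat{X}}\alpha_Q$ at exactly the right moment to conclude $\liedv{\hat{X}}\omega_Q = 0$, which is the Hamiltonian counterpart of the step where the Lagrangian proof uses $\dd(\liedv{X^c}\alpha_L) = 0$.
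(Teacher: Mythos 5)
Your proof is correct and follows exactly the route the paper takes for the Lagrangian analogue (Proposition~\ref{prop_Lie_sym}): apply $\contr{[\hat{X},X_{H,\gamma}]}\omega_Q=\liedv{\hat{X}}(\contr{X_{H,\gamma}}\omega_Q)-\contr{X_{H,\gamma}}(\liedv{\hat{X}}\omega_Q)$, kill the second term via the closedness of $\liedv{\hat{X}}\alpha_Q$, and invoke non-degeneracy of $\omega_Q$. The paper leaves this Hamiltonian version unproved precisely because it is this same computation, so your argument matches the intended one.
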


\begin{proposition}
Let $\hat{X}$ be a vector field on $T^*Q$ such that  
\begin{equation}
  \liedv{\hat{X}}\alpha_Q=\dd f.
\end{equation}
Then $\hat{X}(H)+\gamma(\hat{X})=0$ if and only if $f-\alpha_Q(\hat{X})$ is a constant of the motion. Additionally, $\hat{X}$ commutes with $X_{H,\gamma}$ if and only if 
\begin{equation}
  \contr{\hat{X}}\dd \gamma=0.
\end{equation}
\end{proposition}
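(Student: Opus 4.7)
The plan is to mirror the proofs already given for the Lagrangian Cartan-symmetry propositions, replacing $\alpha_L, \omega_L, E_L, \beta, \xi_{L,\beta}$ by $\alpha_Q, \omega_Q, H, \gamma, X_{H,\gamma}$ and using $\omega_Q = -\dd\alpha_Q$.

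First I would apply Cartan's magic formula to the hypothesis $\liedv{\hat X}\alpha_Q = \dd f$:
\begin{equation}
\dd f = \liedv{\hat X}\alpha_Q = \contr{\hat X}\dd\alpha_Q + \dd(\contr{\hat X}\alpha_Q) = -\contr{\hat X}\omega_Q + \dd(\alpha_Q(\hat X)),
\end{equation}
so $\contr{\hat X}\omega_Q = -\dd(f - \alpha_Q(\hat X))$. Then I would compute $X_{H,\gamma}(f - \alpha_Q(\hat X))$ by pairing both sides of the last equality with $X_{H,\gamma}$:
\begin{equation}
X_{H,\gamma}\bigl(f - \alpha_Q(\hat X)\bigr) = -\bigl(\contr{\hat X}\omega_Q\bigr)(X_{H,\gamma}) = \omega_Q(X_{H,\gamma},\hat X) = (\dd H + \gamma)(\hat X) = \hat X(H) + \gamma(\hat X).
\end{equation}
This gives the first equivalence immediately.

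For the second (``additionally'') part, the natural reading is that it is stated under the same hypothesis $\liedv{\hat X}\alpha_Q = \dd f$ together with $\hat X(H) + \gamma(\hat X) = 0$, paralleling the analogous Lagrangian proposition. Since $\liedv{\hat X}\alpha_Q$ is exact, $\liedv{\hat X}\omega_Q = -\dd\liedv{\hat X}\alpha_Q = 0$. Then, exactly as in the proof of Proposition~\ref{prop_Lie_sym}, I would use
\begin{equation}
\contr{[\hat X, X_{H,\gamma}]}\omega_Q = \liedv{\hat X}\bigl(\contr{X_{H,\gamma}}\omega_Q\bigr) - \contr{X_{H,\gamma}}\bigl(\liedv{\hat X}\omega_Q\bigr) = \liedv{\hat X}(\dd H + \gamma) = \dd(\hat X(H)) + \liedv{\hat X}\gamma.
\end{equation}
Applying Cartan's formula once more, $\liedv{\hat X}\gamma = \contr{\hat X}\dd\gamma + \dd(\gamma(\hat X))$, and invoking $\hat X(H) + \gamma(\hat X) = 0$ so that $\dd(\hat X(H)) + \dd(\gamma(\hat X)) = 0$, the right-hand side collapses to $\contr{\hat X}\dd\gamma$. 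Non-degeneracy of $\omega_Q$ then yields $[\hat X, X_{H,\gamma}] = 0 \iff \contr{\hat X}\dd\gamma = 0$.

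The main obstacle is just bookkeeping: getting the sign in $\omega_Q = -\dd\alpha_Q$ right so that the Hamiltonian analog of the Lagrangian identity $\contr{\xi_{L,\beta}}\omega_L = \dd E_L + \beta$ produces $\hat X(H) + \gamma(\hat X)$ with the correct sign, and correctly interpreting the scope of ``additionally'' so that the cancellation $\dd(\hat X(H)) + \dd(\gamma(\hat X)) = 0$ is available in the second equivalence. No new idea beyond the earlier Lagrangian proofs is required.
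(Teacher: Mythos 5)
Your proof is correct and follows exactly the route the paper intends: the paper omits proofs in the Hamiltonian section, relying on the analogy with the Lagrangian propositions, and your argument is precisely the transcription of those proofs with $\alpha_L,\omega_L,E_L,\beta,\xi_{L,\beta}$ replaced by $\alpha_Q,\omega_Q,H,\gamma,X_{H,\gamma}$. Your reading of ``additionally'' (assuming both $\liedv{\hat{X}}\alpha_Q=\dd f$ and $\hat{X}(H)+\gamma(\hat{X})=0$) is the one consistent with the Lagrangian analogue ``a Noether symmetry is a Lie symmetry iff $\contr{X^c}\dd\beta=0$'', and the signs all check out.
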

Now suppose that $(L,\beta)$ is a Lagrangian system such that $H\circ \Leg = E_L$ and $\Leg^* \gamma=\beta$. Let $\tilde{X}$ be a vector field on $TQ$ and $\hat{X}$ the $\Leg$-related vector field on $T^*Q$. Then:
\begin{enumerate}[label=\roman*)]
\item $\hat{X}$ commutes with $X_{H,\gamma}$ if and only if $\tilde{X}$ is a dynamical symmetry of $(L,\beta)$.
\item $\liedv{\hat{X}}\alpha_Q=\dd f$ if and only if $\liedv{\tilde{X}}\alpha_L=\dd g$, where $g=f\circ \Leg$.
\item Suppose that $\liedv{\hat{X}}\alpha_Q=\dd f$. Then the following assertions are equivalent.
\begin{enumerate}[label=\alph*)]
\item $\hat{X}(H)+\gamma(\hat{X})=0$.
\item $f-\alpha_Q(\hat{X})$ is a conserved quantity.
\item $\tilde{X}(E_L)+\beta(\tilde{X})=0$.
\item $f\circ \Leg-\alpha_L(\tilde{X})$ is a conserved quantity.
\end{enumerate}

\end{enumerate}

\section{Rayleigh dissipation}\label{section_Rayleigh}
\subsection{Rayleigh dissipation function and 1-form}
Rayleigh \cite{rayleigh_1871} considers the hypothesis that there is a non-conservative force linear
on the velocities. This external force can be described as a semibasic 1-form on $TQ$ as follows:
\begin{equation}
R=R_{ij}(q)\dot q^i \dd q^j, \label{Rayleigh_function}
\end{equation} 
where $R_{ij}$ is symmetric. Of course, $R$ can be described as a bilinear form on $TQ$:
\begin{equation}
\begin{aligned}
R:\ &TQ\times TQ \to \RR,\\
& R(q^i,\dot q^i_1,\dot q^j_2)=R_{ij} \dot q^i_1\dot q^j_2, \label{Rayleigh_bilinear_form}
\end{aligned}
\end{equation}
or, in other words, a symmetric $(0,2)$-tensor $R$ on $Q$. Since $R$ is a $(0,2)$-tensor on $Q$, it defines a linear mapping 
\begin{equation}
\tilde{R}: TQ\to T^*Q
\end{equation} 
by
\begin{equation}
\tilde{R}(v_q)=\contr{v_q}R,
\end{equation}
that is,
\begin{equation}
\tilde{R}(v_q)(w_q)=R(v_q,w_q).
\end{equation}
Therefore
\begin{equation}
\tilde{R}(q^i,\dot q^i)=(q^i,R_{ij}(q)\dot q^i \dot q^j),
\end{equation}
so $\tilde{R}$ defines a semibasic 1-form $\bar{R}$ on $TQ$ given by
\begin{equation}
\bar{R}=R_{ij}(q)\dot q^i \dd q^j. \label{Rayleigh_1_form}
\end{equation}
In the literature \cite{gantmakher_70,goldstein_87} the Rayleigh dissipation function is defined as
\begin{equation}
  \mathcal{R}(q,\dot{q})=\frac{1}{2}R_{ij}(q)\dot{q}^{i} \dot{q}^{j}, \label{Rayleigh_dissipation_function}
\end{equation}
so we can write
\begin{equation}
  \bar{R}= \frac{\partial \mathcal{R}} {\partial \dot{q}^{i}}\dd q^{i}= S^{*}(\dd \mathcal{R}).
\end{equation}
Notice that external forces of the form $S^*(\dd \mathcal{F})$, for some function $\mathcal{F}$, are quite more general than the form \eqref{Rayleigh_function} originally proposed by Rayleigh \cite{rayleigh_1871}. That is, we do not need to require $\mathcal{R}$ to be of the form \eqref{Rayleigh_dissipation_function}. In fact, more general dissipation functions are studied in Refs.~\cite{lurie_02,minguzzi_15}. This function $\mathcal{R}$ can be physically interpreted as a potential that depends on the velocities from which the external force is derived.

\begin{remark}
    $\mathcal{R}$ and $\tilde{\mathcal{R}}=\mathcal{R}+f$ define the same 1-form $\bar{R}$, for $f:Q\to \RR$ arbitrary.
\end{remark}


We shall now consider a Lagrangian system with hyperregular Lagrangian function $L$, and which is subject to an external force linear on the velocities. Suppose that the force can be described through the Rayleigh dissipation function $\mathcal{R}$. Then the equations of motion of the system are the integral curves of the vector field $\xi_{L,\bar{R}}$, given by
\begin{equation}
\contr{\xi_{L,{\bar{R}}}}\omega_L=\dd E_L+\bar{R}.
\end{equation} 
Let $\xi_{\bar{R}}$ be the vector field given by
\begin{equation}
\contr{\xi_{{\bar{R}}}}\omega_L={\bar{R}},
\end{equation}
then
\begin{equation}
\xi_{L,{\bar{R}}}=\xi_L+\xi_{\bar{R}},
\end{equation}
where $\xi_L$ is the vector field given by Eq.~\eqref{Lagrangian_VF}.
We have
\begin{equation}
\xi_{\bar{R}}=-R_{ik} \dot{q}^k W^{ij}\parder{}{\dot{q}^j},
\end{equation}
where $W^{ij}$ is the inverse of the Hessian matrix of the Lagrangian \eqref{Hessian_Lagrangian}. Then $\xi_{L,\bar{R}}$ is a SODE in the sense of Eq.~\eqref{SODE}, and the equations of motion of the system are
\begin{equation}
\der{}{t}\left(\parder{L}{\dot{q}^i}\right)-\parder{L}{q^i}=-R_{ij}(q)\dot{q}^j=-\frac{\partial \mathcal{R}} {\partial \dot{q}^i}.
\end{equation} 
It is easy to see that
\begin{equation}
  \xi_{L,\beta}(E_L)+\Delta (\mathcal{R})=0,
\end{equation}
where $\Delta$ is the Liouville vector field. In particular, if $\mathcal{R}$ is of the form \eqref{Rayleigh_dissipation_function}, then $\Delta (\mathcal{R})=2\mathcal{R}$.

We can also consider the Hamiltonian formalism for the Rayleigh dissipation. Indeed, since we have assumed $L$ to be hyperregular, we can always define the external force $\hat{R}$ on $T^*Q$ by
\begin{equation}
\bar{R}=\Leg^*\hat{R},
\end{equation}
and consider the Hamiltonian function given by Eq.~\eqref{Hamiltonian_Lagrangian}. Locally $\hat{R}$ can be written as
\begin{equation}
\hat{R}={R}_{ij}(q)p_i \dd q^j
\end{equation}
Then the equations of motion of the system are the integral curves of the vector field $X_{H,\hat{R}}$, given by
\begin{equation}
\contr{X_{H,\hat{R}}}\omega_Q=\dd H+\hat{R}.
\end{equation}
If $Z_{\hat{R}}$ is the vector field defined by
\begin{equation}
\contr{Z_{\hat{R}}}\omega_Q=\hat{R},
\end{equation}
then we have
\begin{equation}
X_{H,\hat{R}}=X_H+Z_{\hat{R}},
\end{equation}
where $X_H$ is the Hamiltonian vector field given by Eq.~\eqref{Hamiltonian_VF}. In canonical coordinates, 
$X_H$ and $\hat{R}$ are given by Eqs.~\eqref{Hamiltonian_VF_local} and \eqref{Rayleigh_1_form}, respectively, and we have
\begin{gather}
Z_{\hat{R}}=-{R}_{ij}(q)p_i\parder{}{p_j},\\
X_{H,\hat{R}}=\parder{H}{p_i}\parder{}{q^i}-\left(\parder{H}{q^i}+{R}_{ij}(q) p_j\right)\parder{}{p_i},
\end{gather}
where we have made use of the fact that $R_{ij}$ is symmetric. Thus, the equations of motion are
\begin{equation}
\begin{aligned}
&\der{q^i}{t}=\parder{H}{p_i},\\
&\der{p_i}{t}=-\left(\parder{H}{q^i}+R_{ij}(q) p_j\right).
\end{aligned}
\end{equation}
As we have shown in Section \ref{section_morphisms}, given the semibasic 1-form $\bar{R}$, we can define the following morphism of fibred bundles:

\begin{center}
\begin{tikzcd}
D_{\bar{R}}: TQ \arrow[rr] \arrow[rd, "\tau_q"'] &   & T^*Q \arrow[ld, "\pi_Q"] \\
                                                          & Q &                         
\end{tikzcd},
\end{center}
\begin{equation}
    \left\langle D_{\bar{R}}(v_q),w_q\right\rangle={\bar{R}}(v_q)(u_{w_q}),
\end{equation}
for every $v_q,w_q\in T_qQ,\ u_{w_q}\in T_{w_q}(TQ)$, with $T\tau_Q(u_{w_q})=w_q$. In local coordinates, 
we have
\begin{equation}
    D_{\bar{R}}(q^i,\dot q^i)=\left(q^i,R_{ij}(q)\dot{q}^j\right)
\end{equation}
\subsection{Constants of the motion for Rayleigh dissipation}
We shall now consider the case in which the external force is derived from a dissipation function $\mathcal{R}$ (not necessarily quadratic in the velocities).

\begin{lemma}\label{lemma_rayleigh_potential} Consider a semibasic 1-form $\bar{R}$ on $TQ$ given by 
\begin{equation}
  \bar{R} = S^*(\dd \mathcal{R})
\end{equation}
for some function $\mathcal{R}:TQ\to \RR$, where $S^*$ is the adjoint of the vertical endomorphism. Then, for each vector field $X$ on $Q$
\begin{flalign}
  &\bar{R}(X^{c})=X^{v}(\mathcal{R}),\\
  &\liedv{X^c}  \bar R = S^* (\dd (X^c (\mathcal R))). \label{liedv_Rayleigh}
\end{flalign}
Similarly, for each vector field $\tilde{X}$ on $TQ$,
\begin{equation}
  \bar{R}(\tilde{X}) = (S\tilde{X}) (\mathcal{R}).
\end{equation}
\end{lemma}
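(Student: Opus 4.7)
The plan is to peel the lemma into its three assertions, handle the two pointwise identities first (which are essentially repackagings of the definition of $S^{*}$), and then tackle the Lie-derivative identity, whose substance is the well-known fact that the vertical endomorphism is invariant under complete lifts.

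First, I would unwind the adjoint. For any 1-form $\alpha$ on $TQ$ and any vector field $Y$ on $TQ$, the definition of $S^{*}$ gives $(S^{*}\alpha)(Y)=\alpha(SY)$. Applied to $\alpha=\dd\mathcal{R}$, this yields
\begin{equation}
\bar{R}(Y)=(\dd\mathcal{R})(SY)=(SY)(\mathcal{R}).
\end{equation}
Taking $Y=\tilde{X}$ an arbitrary field on $TQ$ immediately gives the third identity $\bar{R}(\tilde{X})=(S\tilde{X})(\mathcal{R})$. Specialising to $Y=X^{c}$ and invoking $SX^{c}=X^{v}$ (already used in the proof of Theorem~\ref{Noether_th}) yields the first identity $\bar{R}(X^{c})=X^{v}(\mathcal{R})$.

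For the Lie-derivative identity, my plan is to show that $\liedv{X^{c}}$ commutes with $S^{*}$, which reduces the problem to $\liedv{X^{c}}(\dd\mathcal{R})=\dd(X^{c}(\mathcal{R}))$, the standard commutation of Lie derivative with $\dd$. The commutation $\liedv{X^{c}}\circ S^{*}=S^{*}\circ\liedv{X^{c}}$ on 1-forms is the dual of $\liedv{X^{c}}S=0$, the classical statement that the vertical endomorphism is preserved by complete lifts (see e.g.\ Yano--Ishihara). Concretely, for any 1-form $\alpha$ and vector $Y$,
\begin{equation}
(\liedv{X^{c}}(S^{*}\alpha))(Y)-(S^{*}(\liedv{X^{c}}\alpha))(Y)=\alpha\bigl([X^{c},SY]-S[X^{c},Y]\bigr)=\alpha((\liedv{X^{c}}S)Y)=0.
\end{equation}
Applying this with $\alpha=\dd\mathcal{R}$ and using $\liedv{X^{c}}(\dd\mathcal{R})=\dd(X^{c}(\mathcal{R}))$ yields
\begin{equation}
\liedv{X^{c}}\bar{R}=\liedv{X^{c}}(S^{*}(\dd\mathcal{R}))=S^{*}(\dd(X^{c}(\mathcal{R}))),
\end{equation}
as desired.

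The only non-routine step is $\liedv{X^{c}}S=0$. I expect to justify it either by citation or by a quick local check: writing $S=\dd q^{i}\otimes\partial/\partial\dot{q}^{i}$ and $X^{c}=X^{i}\partial/\partial q^{i}+\dot{q}^{j}(\partial X^{i}/\partial q^{j})\partial/\partial\dot{q}^{i}$ (both given in the excerpt), a direct computation of $\liedv{X^{c}}(\dd q^{i})$ and $\liedv{X^{c}}(\partial/\partial\dot{q}^{i})$ produces two terms that cancel in the tensor product. Once this lemma-level fact is in hand, the three assertions of the statement follow in a few lines as sketched above.
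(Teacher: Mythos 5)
Your proof is correct. For the two pointwise identities you follow exactly the paper's route: unwind the adjoint via $(S^*\alpha)(Y)=\alpha(SY)$ to get $\bar{R}(\tilde{X})=(S\tilde{X})(\mathcal{R})$, then specialise with $SX^c=X^v$. Where you genuinely diverge is on the Lie-derivative identity \eqref{liedv_Rayleigh}: the paper simply asserts that it ``can be shown by direct computation in bundle coordinates'' and gives no argument, whereas you derive it intrinsically by commuting $\liedv{X^c}$ past $S^*$, reducing everything to the classical fact $\liedv{X^c}S=0$ together with $\liedv{X^c}\dd\mathcal{R}=\dd(X^c(\mathcal{R}))$. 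Your identity
\begin{equation}
(\liedv{X^{c}}(S^{*}\alpha))(Y)-(S^{*}(\liedv{X^{c}}\alpha))(Y)=\alpha\bigl([X^{c},SY]-S[X^{c},Y]\bigr)=\alpha((\liedv{X^{c}}S)Y)
\end{equation}
checks out (expand both sides with the Leibniz rule for $\liedv{X^c}$ applied to $\alpha(SY)$), and $\liedv{X^c}S=0$ is standard (Yano--Ishihara; it also follows from the coordinate cancellation you sketch). The trade-off: your argument isolates the coordinate work into a single reusable, citable fact about the vertical endomorphism and makes transparent \emph{why} the formula holds (it would fail for a general $\tilde{X}$ on $TQ$ precisely because $\liedv{\tilde{X}}S\neq 0$ in general), while the paper's coordinate computation is shorter to state but opaque. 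Both are complete modulo either a citation or a two-line local check.
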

\begin{proof}
  Indeed, 
  \begin{equation}
  \bar{R}(\tilde{X})
  =\contr{\tilde{X}} \bar{R}
  =\contr{\tilde{X}}(S^{*}\dd \mathcal{R})
  =\contr{S \tilde{X}}\dd \mathcal{R}
  =(S\tilde{X}) (\mathcal{R}).
\end{equation}
In particular, $SX^c=X^v$. Eq.~\eqref{liedv_Rayleigh} can be shown by direct computation in bundle coordinates.
\end{proof}

\begin{proposition}
  Let $X$ be a vector field on $Q$. Then $X^c(L)=X^{v}(\mathcal{R})$ if and only if $X^{v}(L)$ is a constant of the motion.
\end{proposition}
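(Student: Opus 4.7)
The plan is to recognise this proposition as an immediate specialisation of Noether's theorem for dissipative systems (Theorem~\ref{Noether_th}) to the Rayleigh setting, where the external force is $\beta = \bar{R} = S^{*}(\dd \mathcal{R})$.

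First I would recall that Theorem~\ref{Noether_th} states, for a general forced Lagrangian system $(L,\beta)$, that a vector field $X$ on $Q$ satisfies $X^{c}(L) = \beta(X^{c})$ if and only if $X^{v}(L)$ is a constant of the motion of $\xi_{L,\beta}$. This is exactly the structural equivalence we want, modulo identifying the right-hand side $\beta(X^{c})$ with $X^{v}(\mathcal{R})$ in the present setting.

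Next I would invoke the first identity of Lemma~\ref{lemma_rayleigh_potential}, namely $\bar{R}(X^{c}) = X^{v}(\mathcal{R})$, which follows from the interplay $S X^{c} = X^{v}$ and the definition $\bar{R} = S^{*}(\dd \mathcal{R})$. Substituting this into the hypothesis of Theorem~\ref{Noether_th} with $\beta = \bar{R}$ converts the condition $X^{c}(L) = \bar{R}(X^{c})$ into $X^{c}(L) = X^{v}(\mathcal{R})$, which is the statement to be proved.

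There is essentially no obstacle: all the geometric work has already been carried out in Lemma~\ref{lemma_rayleigh_potential} and in the proof of Theorem~\ref{Noether_th}. The proposition is a one-line corollary, and the only thing to take care of is citing the two prior results in the correct order and making explicit that $\xi_{L,\bar{R}}$ is the forced Euler--Lagrange vector field relevant here.
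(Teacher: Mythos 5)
Your proof is correct and matches the paper's intent exactly: the proposition is stated there without a separate proof precisely because it is the immediate combination of Theorem~\ref{Noether_th} (with $\beta=\bar{R}$) and the identity $\bar{R}(X^{c})=X^{v}(\mathcal{R})$ from Lemma~\ref{lemma_rayleigh_potential}. Nothing is missing.
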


\begin{example}[Fluid resistance]
Consider a body of mass $m$ moving through a fluid that fully encloses it. For the sake of simplicity, suppose that the motion takes place along one dimension. 
Then the drag force \cite{batchelor_00,falkovich_11} is given by
\begin{equation}
  \bar{R}
  =\frac{1}{2} \rho~CA \dot{q}^2 \dd q,
\end{equation}
where $C$ is a dimensionless constant depending on the body shape, $\rho$ is the mass density of the fluid and $A$ is the area of the projection of the object on a plane perpendicular to the direction of motion. For the sake of simplicity, suppose that the density is uniform, and then $k=CA\rho/2$ is constant.
The dissipation function is thus
\begin{equation}
  \mathcal{R}=\frac{k}{3}\dot{q}^3.
\end{equation}
If the body is not subject to forces besides the drag, its Lagrangian is $L=m\dot{q}^2/2$. Consider the vector field $X=e^{k/m q}\partial/\partial q$. We can verify that $X^c(L)=X^v(\mathcal{R})$, so $X^v(L)=me^{k/mq}\dot{q}$ is a constant of the motion. In particular, when $k\to 0$ we recover the conservation of momentum.
\end{example}

\begin{proposition}
If $\liedv{X^c}\alpha_L$ is closed, then $X$ is a Lie symmetry of $(L,\bar{R})$ if and only if 
\begin{equation}
  \dd(X^c(E_L))=-S^* (\dd (X^c \mathcal R)).
\end{equation}
\end{proposition}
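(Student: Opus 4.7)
The plan is to reduce the statement to a direct combination of two results already at our disposal: Proposition \ref{prop_Lie_sym}, which characterises Lie symmetries for an arbitrary semibasic external force under the hypothesis that $\liedv{X^c}\alpha_L$ is closed, and Lemma \ref{lemma_rayleigh_potential}, which provides an explicit formula for the Lie derivative of a Rayleigh-type 1-form along a complete lift.

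First, I would specialise Proposition \ref{prop_Lie_sym} to the forced Lagrangian system $(L,\bar{R})$. Since by hypothesis $\dd(\liedv{X^c}\alpha_L)=0$, that proposition asserts that $X$ is a Lie symmetry of $(L,\bar{R})$ if and only if
\[
  \liedv{X^c}\bar{R} \;=\; -\dd\bigl(X^c(E_L)\bigr).
\]
Next, I would invoke equation \eqref{liedv_Rayleigh} from Lemma \ref{lemma_rayleigh_potential}, which rewrites the left-hand side as $\liedv{X^c}\bar{R} = S^{*}\bigl(\dd(X^c(\mathcal{R}))\bigr)$. Substituting this identity into the previous equation yields exactly the stated condition
\[
  \dd\bigl(X^c(E_L)\bigr) \;=\; -S^{*}\bigl(\dd(X^c\mathcal{R})\bigr),
\]
and both directions follow simultaneously from the biconditional in Proposition \ref{prop_Lie_sym}.

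The proof is therefore a short chain of biconditionals rather than a substantive computation, and I do not foresee any real obstacle: the content lies entirely in Lemma \ref{lemma_rayleigh_potential}. The only point to check is the compatibility of hypotheses, namely that ``$\liedv{X^c}\alpha_L$ closed'' in the present statement is literally the condition $\dd(\liedv{X^c}\alpha_L)=0$ required by Proposition \ref{prop_Lie_sym}, which is immediate.
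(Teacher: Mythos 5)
Your proof is correct and is precisely the argument the paper intends (the proposition is stated there without proof, as an immediate consequence of Proposition~\ref{prop_Lie_sym} specialised to $\beta=\bar{R}$ together with Eq.~\eqref{liedv_Rayleigh} of Lemma~\ref{lemma_rayleigh_potential}). Nothing further is needed.
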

\begin{proposition}
If $\liedv{X^c}{\alpha_L}=\dd f$ for some function $f:TQ\to \RR$, then the following statements are equivalent:
\begin{enumerate}[label=\roman*)]
\item $X$ is a Noether symmetry.
\item $X^c(E_L)+X^v(\mathcal{R})=0$.
\item $f-X^v(L)$ is a constant of the motion.
\end{enumerate}
Moreover, a Noether symmetry is a Lie symmetry if and only if $\contr{X^c}\dd \bar{R}=0$.
\end{proposition}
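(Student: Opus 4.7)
The plan is to deduce every assertion by specialising the general results of Section \ref{section_conserved_Lagrangian} to the Rayleigh external force $\beta = \bar{R} = S^{*}(\dd \mathcal{R})$, with Lemma \ref{lemma_rayleigh_potential} serving as the bridge that rewrites $\bar{R}(X^c)$ as $X^v(\mathcal{R})$.

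For the equivalence (i) $\Leftrightarrow$ (ii), I would unwind the definition of a Noether symmetry: the conditions are $X^c(E_L) + \beta(X^c) = 0$ together with the exactness of $\liedv{X^c}\alpha_L$. The latter holds by the standing hypothesis $\liedv{X^c}\alpha_L = \dd f$, so (i) reduces to the single equation $X^c(E_L) + \bar{R}(X^c) = 0$. By Lemma \ref{lemma_rayleigh_potential}, $\bar{R}(X^c) = X^v(\mathcal{R})$, and this is precisely statement (ii).

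For (i) $\Leftrightarrow$ (iii), I would directly cite the proposition of Section \ref{section_conserved_Lagrangian} which asserts that whenever $\liedv{X^c}\alpha_L = \dd f$, being a Noether symmetry is equivalent to $f - X^v(L)$ being a constant of the motion. This is exactly (iii), with no further work required.

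For the final assertion, the earlier proposition of Section \ref{section_conserved_Lagrangian} states that a Noether symmetry is a Lie symmetry if and only if $\contr{X^c}\dd \beta = 0$; setting $\beta = \bar{R}$ yields the condition $\contr{X^c}\dd \bar{R} = 0$. I do not anticipate any genuine obstacle: the whole statement is a mechanical specialisation of the general theory, and the only non-trivial input is the identity $\bar{R}(X^c) = X^v(\mathcal{R})$ already supplied by Lemma \ref{lemma_rayleigh_potential}.
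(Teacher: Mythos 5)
Your proposal is correct and follows exactly the route the paper intends: the proposition is stated without proof precisely because it is the mechanical specialisation of the Section \ref{section_conserved_Lagrangian} results to $\beta=\bar{R}=S^{*}(\dd\mathcal{R})$, with Lemma \ref{lemma_rayleigh_potential} supplying $\bar{R}(X^c)=X^v(\mathcal{R})$. Nothing is missing.
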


 Let $\tilde{X}$ be a vector field on $TQ$.
If $\liedv{\tilde{X}}\alpha_L$ is closed, then $\tilde{X}$ is a dynamical symmetry if and only if
\begin{equation}
  \dd(\tilde{X}(E_L)+(S\tilde{X})(\mathcal{R}))=-\contr{\tilde{X}} \dd \bar{R}.
\end{equation}
\begin{proposition}
 If $\liedv{\tilde{X}}{\alpha_L}=\dd f$, then the following statements are equivalent:
 \begin{enumerate}[label=\roman*)]
 \item  $\tilde{X}$ is a Cartan symmetry.
 \item $\tilde{X}(E_L)+(S\tilde{X})(\mathcal{R})=0.$
 \item $f-(S\tilde{X})(L)$ is a conserved quantity.
 \end{enumerate}
\end{proposition}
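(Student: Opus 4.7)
The plan is to deduce this proposition as a specialisation of the general Cartan symmetry proposition stated earlier for arbitrary external forces, applied to the case $\beta = \bar{R} = S^{*}(\dd \mathcal{R})$. The only new ingredient is Lemma~\ref{lemma_rayleigh_potential}, which rewrites $\bar{R}(\tilde{X})$ as $(S\tilde{X})(\mathcal{R})$.

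First, I would argue i) $\Leftrightarrow$ ii). By definition, $\tilde{X}$ is a Cartan symmetry precisely when $\liedv{\tilde{X}}\alpha_L=\dd f$ (already assumed) and $\tilde{X}(E_L)+\bar{R}(\tilde{X})=0$. Substituting the identity $\bar{R}(\tilde{X}) = (S\tilde{X})(\mathcal{R})$ from Lemma~\ref{lemma_rayleigh_potential} turns the second condition into exactly ii).

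Next, for ii) $\Leftrightarrow$ iii), I would mirror the computation used for the general Cartan symmetry proposition. Starting from the hypothesis and using $\alpha_L = S^{*}\dd L$, write
\begin{equation}
\dd f = \liedv{\tilde{X}}\alpha_L = \contr{\tilde{X}}\dd\alpha_L + \dd(\contr{\tilde{X}}\alpha_L) = \contr{\tilde{X}}\dd\alpha_L + \dd((S\tilde{X})L),
\end{equation}
where $\contr{\tilde{X}}\alpha_L = \contr{\tilde{X}}S^{*}\dd L = \contr{S\tilde{X}}\dd L = (S\tilde{X})L$ by the adjointness of $S$. Contracting with $\xi_{L,\bar{R}}$ and using $\dd\alpha_L = -\omega_L$ together with the dynamical equation $\contr{\xi_{L,\bar{R}}}\omega_L = \dd E_L + \bar{R}$, one obtains the key identity
\begin{equation}
\xi_{L,\bar{R}}\bigl(f - (S\tilde{X})L\bigr) = \tilde{X}(E_L) + \bar{R}(\tilde{X}) = \tilde{X}(E_L) + (S\tilde{X})(\mathcal{R}),
\end{equation}
where the last equality invokes Lemma~\ref{lemma_rayleigh_potential} again. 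The equivalence ii) $\Leftrightarrow$ iii) is then immediate, since the left-hand side vanishes exactly when $f - (S\tilde{X})L$ is a constant of the motion.

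There is no real obstacle here: the proposition is essentially a translation of the previously established Cartan symmetry proposition into the Rayleigh language, and the only point requiring care is the adjoint manipulation $\contr{\tilde{X}}S^{*}\dd L = \contr{S\tilde{X}}\dd L$, which is the same observation used in the earlier proofs (and which specialises to $SX^{c} = X^{v}$ in the point-like case). Since each step is an equivalence, chaining i) $\Leftrightarrow$ ii) $\Leftrightarrow$ iii) closes the argument.
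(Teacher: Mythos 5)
Your proof is correct and follows the same route the paper intends: the equivalence i) $\Leftrightarrow$ ii) is the definition of Cartan symmetry combined with $\bar{R}(\tilde{X})=(S\tilde{X})(\mathcal{R})$ from Lemma~\ref{lemma_rayleigh_potential}, and ii) $\Leftrightarrow$ iii) is exactly the computation the paper gives for the point-like Noether-symmetry proposition, transplanted to a general vector field on $TQ$. Nothing is missing.
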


We shall now cover some examples proposed in Ref.~\cite{minguzzi_15} and obtain their constants of motion.

\begin{example}[A rotating disk]
Let us consider a disk of mass $m$ and radius $r$ placed on a horizontal surface. Let $\varphi$ be the angle of rotation of the disk with respect to a reference axis. The Lagrangian of the disk is $L=T=mr^{2}\dot{\varphi}^{2}/4$ and its Rayleigh dissipation function is $\mathcal{R}=\mu mgr\dot\varphi/2$. The Poincaré-Cartan 1-form is $\alpha_L=mr^2 \dot \varphi/2\  \dd \varphi$. The external force is $\bar{R}=\mu mgr/2\ \dd \varphi$.

 Consider the vector field $\tilde{X}=r\dot{\varphi}\partial/\partial \varphi+\mu g \partial/\partial \dot{\varphi}$. Clearly, $\tilde{X}(E_L)=\tilde{X}(L)=(S\tilde{X})(\mathcal{R})$. We have that
 \begin{equation}
   \liedv{\tilde{X}}\alpha_L=
    \frac{\mu mgr^2}{2} \dd \varphi+
   \frac{mr^3}{2}\dot \varphi \dd \dot \varphi=\dd f,
 \end{equation}
 where
 \begin{equation}
   f=\frac{\mu mgr^2}{2}\varphi +\frac{mr^3}{4}\dot \varphi^2
 \end{equation}
 modulo a constant, and $(S\tilde{X})(L)=mr^{3} \dot{\varphi}^{2}/2$,
  so
  \begin{equation}
    f-(S\tilde X) (L)=\frac{\mu mgr^2}{2}\varphi -\frac{mr^3}{4}\dot \varphi^2
  \end{equation}
  is a constant of the motion. Since $\bar{R}$ is closed, $\contr{\tilde{X}}\dd \bar{R}=0$ is trivially satisfied, so $\tilde{X}$ is a dynamical symmetry as well as a Cartan symmetry. 

  However, since $\bar{R}$ is closed, it is not strictly an external force. In fact, the Lagrangian
  \begin{equation}
    \tilde{L}=L+\frac{\mu mgr}{2} \varphi
  \end{equation}
  leads to the same equations of motion as $L$ with the external force $\bar{R}$.
\end{example}

\begin{example}[The rotating stone polisher]
Consider a system formed by two concentric rings of the same mass $m$ and radius $r$, which are placed over a rough surface, and rotate in opposite directions. Let $(x,y)$ be the position of the centre and $\theta$ the orientation of the machine. Let $\omega$ be the angular velocity of the rings. The Rayleigh dissipation function is given by
\begin{equation}
  \mathcal{R}=2\mu mg r\omega +\frac{\mu mg}{2r\omega}(\dot{x}^{2}+\dot{y}^{2}), 
\end{equation}
and the Lagrangian is $L=T=m(\dot{x}^{2}+\dot{y}^{2} +r^{2}\dot{\theta}^{2}+r^{2}\omega^{2})$. The Poincaré-Cartan 1-form is $\alpha_L=2m(\dot x\dd  x+ \dot y \dd  y +r^2 \dot \theta \dd  \theta)$
, and the external force is
\begin{equation}
  \bar{R}=\frac{\mu mg}{r\omega}(\dot x\dd x + \dot y \dd y).
\end{equation}
 Let $\tilde{X}^{(1)}=2r\omega \partial/\partial x +\mu g \partial/ \partial \dot{x}$ and $\tilde{X}^{(2)}=2r\omega \partial/\partial y +\mu g  \partial/ \partial \dot{y}$. We can check that $\tilde{X}^{(i)}(E_L)=\tilde{X}^{(i)}(L)=(S\tilde{X}^{(i)})(\mathcal{R})$ (for $i=1,2$). We have that $\liedv{\tilde{X}^{(i)}}\alpha_L=\dd f_i$ for $f_1=2\mu mgx$ and $f_2=2\mu mgy$, along with $(S\tilde{X}^{(1)})(L)=4mr \dot x$ and $(S\tilde{X}^{(2)})(L)=4mr \dot y$, so $2mr\omega \dot{x}-\mu mgx$ and $2mr\omega \dot{y}-\mu mgy$ are constants of the motion.
\end{example}

\section{Momentum map and reduction}\label{section_reduction}

It is well-known that if a $d$-dimensional symmetry group is acting over a physical system, then the number of independent degrees of freedom is reduced by $d$. In other words, $Q$ is reduced by $d$ dimensions, so $TQ$ and $T^*Q$ are reduced by $2d$ dimensions. Therefore $2d$ variables can be eliminated from the equations of motion. This fact can be exploited in a systematic way by means of the procedure known as \emph{reduction}, which is due to Marsden and Weinstein \cite{marsden_weinstein_74,abraham_marsden_08}.

Let $G$ be a Lie group acting on $Q$ and consider the lifted action to $TQ$ using tangent prolongation, that is, if $\Phi_g:Q\to Q$ is the diffeomorphism given by $\Phi_g(q)=gq$ for each $g\in G$ and $q\in Q$, then the lifted action is defined by
\begin{equation}
  T\Phi_g: TQ\to TQ.
\end{equation}
In what follows, we shall assume every group action considered to be free and proper.
Let $\mathfrak{g}$ the the Lie algebra of $G$ and $\mathfrak{g}^*$ its dual. Let $L:TQ\to \RR$ be a Lagrangian function subjected to an external force $\beta$.
 Suppose that the $G$-action leaves $L$ invariant, and hence $\alpha_L$ and $\omega_L$ are invariant. Then the natural momentum map \cite{abraham_marsden_08},
\begin{equation}
\begin{aligned}
  J:TQ&\to \mathfrak{g}^*,\\
  J(v_q)(\xi)&=\alpha_L(v_q) \left(\xi_Q^c(v_q)  \right),
\end{aligned}
\end{equation}
is equivariant and Hamiltonian. For each $\xi\in \mathfrak{g}$ and $v_\in TQ$, $J\xi:TQ\to \RR$ is the function given by
\begin{equation}
  J\xi (v_q) = \left\langle J(v_q),\xi  \right\rangle   \label{J_xi_function}.
\end{equation} 
\begin{lemma} \label{lemma_vector_space}

Let $\xi \in \mathfrak{g}$.
Then 
\begin{enumerate}[label=\roman*)]
\item 
 $J\xi$ is a conserved quantity for $\xi_{L,\beta}$ if and only if
 \begin{equation}
   \beta(\xi_Q^c)=0 \label{lemma_J_conserved}
 \end{equation}
 \item If the previous equation holds, then $\xi$ leaves $\beta$ invariant if and only if
 \begin{equation}
   \contr{\xi_Q^c}\dd \beta = 0. \label{lemma_beta_invariant}
 \end{equation}
\end{enumerate}
In addition, the vector subspace of $\mathfrak{g}$ given by
\begin{equation}
  \mathfrak{g}_\beta = \left\{\xi \in \mathfrak{g} \mid
    \beta(\xi_Q^c)=0,\ \contr{\xi_Q^c}\dd \beta = 0
    \right\}
\end{equation}
is a Lie subalgebra of $\mathfrak{g}$.
\end{lemma}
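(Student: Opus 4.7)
The plan is to exploit the $G$-invariance of $L$ to reduce each statement to a short computation in Cartan calculus. The starting point is that the tangent lift of the $G$-action preserves the vertical endomorphism $S$, and hence also $\alpha_L$, $\omega_L$, and $E_L = \Delta L - L$ (since $[\xi_Q^c, \Delta] = 0$). Infinitesimally this yields $\liedv{\xi_Q^c}\alpha_L = 0$, $\liedv{\xi_Q^c}\omega_L = 0$, and $\xi_Q^c(E_L) = 0$ for every $\xi \in \mathfrak{g}$. Since $J\xi = \contr{\xi_Q^c}\alpha_L$, Cartan's formula gives $\dd(J\xi) = -\contr{\xi_Q^c}\dd \alpha_L = \contr{\xi_Q^c}\omega_L$, so $\xi_Q^c$ is the Hamiltonian vector field of $J\xi$ with respect to $\omega_L$.

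For part (i), I would compute
\begin{equation}
\xi_{L,\beta}(J\xi) = \contr{\xi_{L,\beta}}\dd(J\xi) = \contr{\xi_{L,\beta}}\contr{\xi_Q^c}\omega_L = -\contr{\xi_Q^c}(\dd E_L + \beta) = -\beta(\xi_Q^c),
\end{equation}
using the defining equation for $\xi_{L,\beta}$ and $\xi_Q^c(E_L) = 0$, so conservation of $J\xi$ is equivalent to $\beta(\xi_Q^c) = 0$. For part (ii), assuming (i), Cartan's magic formula reads $\liedv{\xi_Q^c}\beta = \dd(\beta(\xi_Q^c)) + \contr{\xi_Q^c}\dd \beta$, whose first summand vanishes by (i); hence $\liedv{\xi_Q^c}\beta = 0$ is equivalent to $\contr{\xi_Q^c}\dd \beta = 0$.

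For the subalgebra claim, $\mathfrak{g}_\beta$ is clearly a vector subspace by linearity in $\xi$ of both defining conditions, so the real work is to show closure under the bracket. My plan is to invoke the standard identity $[\xi, \eta]_Q^c = [\xi_Q^c, \eta_Q^c]$ and combine it with $\contr{[X, Y]} = \liedv{X}\contr{Y} - \contr{Y}\liedv{X}$. For $\xi, \eta \in \mathfrak{g}_\beta$, part (ii) gives $\liedv{\xi_Q^c}\beta = \liedv{\eta_Q^c}\beta = 0$, whence $\contr{[\xi_Q^c, \eta_Q^c]}\beta = \liedv{\xi_Q^c}(\contr{\eta_Q^c}\beta) - \contr{\eta_Q^c}(\liedv{\xi_Q^c}\beta) = 0$, and the same identity applied to $\dd \beta$ together with $\liedv{\xi_Q^c}\dd \beta = \dd(\liedv{\xi_Q^c}\beta) = 0$ yields $\contr{[\xi_Q^c, \eta_Q^c]}\dd \beta = 0$. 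The main bookkeeping challenge I anticipate is verifying the sign conventions relating $[\xi, \eta]_Q$ to $[\xi_Q, \eta_Q]$ (which differ by a sign depending on whether the infinitesimal generator comes from left or right translation), and confirming that tangent lifts really preserve $S$, and hence $\alpha_L$, without extra hypotheses beyond $G$-invariance of $L$.
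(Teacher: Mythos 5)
Your proposal is correct and follows essentially the same route as the paper: the identity $\dd(J\xi)=\contr{\xi_Q^c}\omega_L$ from invariance of $\alpha_L$, the vanishing of $\xi_Q^c(E_L)$ via $[\xi_Q^c,\Delta]=0$, Cartan's magic formula for part (ii), and the contraction identity $\contr{[X,Y]}=\liedv{X}\contr{Y}-\contr{Y}\liedv{X}$ together with $[\xi_Q,\eta_Q]^c=[\xi_Q^c,\eta_Q^c]$ for the subalgebra claim. The only cosmetic difference is that you invoke $\liedv{\xi_Q^c}\beta=0$ directly from part (ii) in the closure argument, whereas the paper expands the same bracket term by term before applying the defining conditions.
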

\begin{proof}
\begin{enumerate}[label=\roman*)]
\item 
We have that
\begin{equation}
  J\xi= \alpha_L (\xi_Q^c) = \contr{\xi_Q^c}\alpha_L,
\end{equation}
so
\begin{equation}
 \dd(J\xi) = \dd (\contr{\xi_Q^c}\alpha_L) 
 = \liedv{\xi_Q^c}\alpha_L - \contr{\xi_Q^c} \dd \alpha_L
 = \contr{\xi_Q^c}\omega_L 
 \label{Hamiltonian_vector_momentum_map}.
\end{equation}

Contracting this equation with $\xi_{L,\beta}$, one gets
\begin{equation}
  \contr{\xi_{L,\beta}}(\dd(J\xi))=\xi_{L,\beta}(J\xi),
\end{equation}
on the left-hand side, and
\begin{equation}
  \contr{\xi_{L,\beta}}\contr{\xi_Q^c}\omega_L
  =-\contr{\xi_Q^c}\contr{\xi_{L,\beta}}\omega_L
  =-\contr{\xi_Q^c}(\dd E_L+\beta)=-\xi_Q^c(E_L)-\beta(\xi_Q^c),
\end{equation}
on the right-hand side.

Thus $J\xi$ is a conserved quantity for $\xi_{L,\beta}$ if and only if
\begin{equation}
  \xi_Q^c(E_L)+\beta(\xi_Q^c)=0. \label{conservation_dynamics_algebra}
\end{equation}
Now observe that
\begin{equation}
\begin{aligned}
  \xi_Q^c(E_L)&=\xi_Q^c(\Delta(L))-\xi_Q^c(L)=\xi_Q^c(\Delta(L))
  =\liedv{\xi_Q^c}(\Delta(L))=\liedv{\xi_Q^c}(\contr{\Delta}\dd L)\\
  &=\contr{[\xi_Q^c,\Delta]}\dd L+\contr{\Delta}(\liedv{\xi_Q^c} \dd L)
  =\contr{[\xi_Q^c,\Delta]}\dd L
  =[\xi_Q^c,\Delta](L),
\end{aligned}
\end{equation}
since $\xi_Q^c(L)=0$ by the $G$-invariance of $L$, but $[\xi_Q^c,\Delta]=0$, and thus
\begin{equation}
  \xi_Q^c(E_L)=0 
\end{equation}
for each $\xi\in \mathfrak{g}$, that is, $E_L$ is $G$-invariant. By Eq.~\eqref{conservation_dynamics_algebra}, $J\xi$ is a conserved quantity for $\xi_{L,\beta}$ if and only if
\begin{equation}
  \beta(\xi_Q^c) = 0. \label{conservation_dynamics_algebra_simplified}
\end{equation}
\item For each $\xi\in \mathfrak{g}_\beta$, we have that
\begin{equation}
  \liedv{\xi_Q^c} \beta = \dd (\contr{\xi_Q^c} \beta) + \contr{\xi_Q^c} \dd \beta = \dd (\beta(\xi_Q^c)) + \contr{\xi_Q^c} \dd \beta.
\end{equation}
If Eq.~\eqref{conservation_dynamics_algebra_simplified} holds, then $\beta$ is $\mathfrak{g}_\beta$-invariant (i.e., $\liedv{\xi_Q^c} \beta = 0$) if and only if
\begin{equation}
  \contr{\xi_Q^c} \dd \beta = 0
\end{equation}
\end{enumerate}

For $\mathfrak{g}_\beta$ being a Lie subalgebra it is necessary and sufficient that $[\xi,\eta]\in \mathfrak{g}_\beta$ for each $\xi,\eta\in \mathfrak{g}_\beta$. Since $\xi\in \mathfrak{g}\mapsto \xi_Q\in \mathfrak{X}(Q)$ is a Lie algebra antihomomorphism \cite{ortega_ratiu_04}, this is equivalent to
\begin{flalign}
  &\beta \left([\xi_Q,\eta_Q]^c  \right)=0, \label{subalgebra_condition_1}\\
  &\contr{[\xi_Q,\eta_Q]^c} \dd\beta=0, \label{subalgebra_condition_2} 
\end{flalign}
but $[\xi_Q,\eta_Q]^c=[\xi_Q^c,\eta_Q^c]$, since the complete lift is a morphism between Lie algebras. Then
\begin{equation}
\begin{aligned}
  \beta \left([\xi_Q,\eta_Q]^c  \right) &=  \beta \left([\xi_Q^c,\eta_Q^c]  \right) 
  = \contr{[\xi_Q^c,\eta_Q^c]}\beta
  = \liedv{\xi_Q^c} \contr{\eta_Q^c} \beta 
    - \contr{\eta_Q^c} \liedv{\xi_Q^c}  \beta\\
  &= \xi_Q^c (\beta(\eta_Q^c)) - \eta_Q^c (\beta(\xi_Q^c)) 
  - \contr{\eta_Q^c} (\contr{\xi_Q^c} \dd \beta)
  = 0 
    , 
\end{aligned}
\end{equation}
by Eqs.~\eqref{lemma_beta_invariant} and \eqref{lemma_J_conserved}.
Similarly,
\begin{equation}
\begin{aligned}
   \contr{[\xi_Q,\eta_Q]^c} \dd\beta
   &= \contr{[\xi_Q^c,\eta_Q^c]} \dd\beta
   = \liedv{\xi_Q^c} \contr{\eta_Q^c} \dd \beta 
    - \contr{\eta_Q^c}\liedv{\xi_Q^c}\dd \beta\\
    &= \liedv{\xi_Q^c} \contr{\eta_Q^c} \dd \beta 
    - \contr{\eta_Q^c}\dd \liedv{\xi_Q^c} \beta=0.
\end{aligned}
\end{equation}
\end{proof}
It is worth mentioning that our Lemma \ref{lemma_vector_space} i) was previously obtained by Marsden and West \cite[Theorem 3.1.1]{marsden_west_01}, albeit from a variational approach.

\begin{corollary}
For each $\xi\in \mathfrak{g}_\beta$, $\xi_Q^c$ is a Noether symmetry and it is a symmetry of the forced Lagrangian system.
\end{corollary}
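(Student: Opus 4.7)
The plan is to verify the two definitions directly, reusing the computations already performed in the proof of Lemma~\ref{lemma_vector_space}. Recall that a Noether symmetry requires (a) $X^c(E_L)+\beta(X^c)=0$ and (b) $\liedv{X^c}\alpha_L$ exact, while a symmetry of the forced Lagrangian system requires $X^c(L)=\beta(X^c)$. So I have three identities to establish for $X = \xi_Q$, all of which should reduce to facts we already have in hand.

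First I would observe that $G$-invariance of $L$ gives $\xi_Q^c(L)=0$ for every $\xi\in\mathfrak{g}$ (and a fortiori for $\xi\in\mathfrak{g}_\beta$). Combined with $\beta(\xi_Q^c)=0$, which is the first defining condition of $\mathfrak{g}_\beta$, this immediately yields $\xi_Q^c(L)=\beta(\xi_Q^c)$, so $\xi_Q^c$ is a symmetry of the forced Lagrangian system. For condition (a) of the Noether symmetry, the proof of the lemma already showed that $\xi_Q^c(E_L)=0$ as a consequence of the $G$-invariance of $L$ and $[\xi_Q^c,\Delta]=0$; adding $\beta(\xi_Q^c)=0$ then gives $\xi_Q^c(E_L)+\beta(\xi_Q^c)=0$.

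For condition (b) of the Noether symmetry, the cleanest route is to note that the $G$-invariance of $L$ implies $G$-invariance of $\alpha_L=S^*(\dd L)$ (this is mentioned right before the definition of the momentum map in the paper), so that $\liedv{\xi_Q^c}\alpha_L=0$ for every $\xi\in\mathfrak{g}$. The zero form is trivially exact (e.g.\ $\dd 0$), which finishes the verification.

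No step is a real obstacle: the whole statement is a bookkeeping consequence of Lemma~\ref{lemma_vector_space} and the hypothesis that $L$ (hence $\alpha_L$ and $E_L$) is $G$-invariant. The only point worth being explicit about is which conditions in the definitions of \emph{Noether symmetry} and \emph{symmetry of the forced Lagrangian system} each of the two hypotheses defining $\mathfrak{g}_\beta$ (together with $G$-invariance of $L$) is being used for; the condition $\contr{\xi_Q^c}\dd\beta=0$ is not needed for this corollary, and this is worth flagging in a short remark.
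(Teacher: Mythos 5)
Your proof is correct and follows essentially the same route as the paper: both rest on the $G$-invariance of $L$ (hence of $\alpha_L$ and $E_L$) together with the condition $\beta(\xi_Q^c)=0$ defining $\mathfrak{g}_\beta$. The only cosmetic difference is that you verify $\xi_Q^c(L)=\beta(\xi_Q^c)$ directly from the definition, whereas the paper invokes Remark~\ref{Noether_L_symmetries} ($\liedv{\xi_Q^c}\alpha_L=0$ upgrades a Noether symmetry to a symmetry of the forced system); your observation that $\contr{\xi_Q^c}\dd\beta=0$ is not needed here is accurate.
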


\begin{proof}
Since $\alpha_L$ is invariant, 
\begin{equation}
  \liedv{\xi_Q^c}\alpha_L=0
\end{equation}
for each $\xi \in \mathfrak{g}$. In combination with Eq.~\eqref{conservation_dynamics_algebra}, this implies that $\xi_Q^c$ is a Noether symmetry. By Remark~\ref{Noether_L_symmetries}, it is also a symmetry of the forced Lagrangian system.
\end{proof}

\begin{theorem}\label{theorem_reduction}
 Let  $G_\beta\subset G$ be the Lie subgroup generated by $\mathfrak{g}_\beta$ and $J_\beta:TQ\to \mathfrak{g}_\beta^*$ the reduced momentum map. Let $\mu\in \mathfrak{g}_\beta^*$ be a regular value of $J_{\beta}$ and $(G_\beta)_\mu$ the isotropy group in $\mu$.
 Then 
\begin{enumerate}[label=\roman*)]
\item $J_\beta^{-1}(\mu)$ is a submanifold of $TQ$ and $\xi_{L,\beta}$ is tangent to it. 
\item The quotient space $M_\mu\coloneqq J_\beta^{-1}(\mu)/(G_\beta)_\mu$ is endowed with an induced symplectic structure $\omega_\mu$, namely 
\begin{equation}
  \pi_\mu^*\omega_\mu = \incl_\mu^*\omega_L,
\end{equation}
where $\pi_\mu:J_\beta^{-1}(\mu)\to M_\mu$ and $\incl_\mu:J_\beta^{-1}(\mu)\hookrightarrow TQ$ denote the projection and the inclusion, respectively. 
\item $L$ induces a function $L_\mu:M_\mu\to \RR$ defined by
\begin{equation}
  L_\mu \circ \pi_\mu = L\circ \incl_\mu.
\end{equation}
Moreover, we can introduce a function $E_{L_\mu}:M_\mu\to \RR$, given by $E_{L_\mu}=\Delta_\mu(L_\mu)-L_\mu$, which satisfies
\begin{equation}
  E_{L_\mu} \circ \pi_\mu = E_L\circ \incl_\mu. \label{reduced_energy}
\end{equation}
\item $\beta$ induces a reduced semibasic 1-form $\beta_\mu$ on $M_\mu$, given by
\begin{equation}
  \pi_\mu^* \beta_\mu=\incl_\mu^*\beta.
\end{equation}
\end{enumerate}
\end{theorem}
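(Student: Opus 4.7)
The plan is to assemble the four assertions by specializing the Marsden-Weinstein symplectic reduction scheme to the symplectic manifold $(TQ,\omega_L)$ with the $G_\beta$-action, and then to handle the two objects that are specific to the forced setting, namely the dynamics $\xi_{L,\beta}$ and the external force $\beta$, using the invariance properties already established in Lemma~\ref{lemma_vector_space} and its proof.

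For (i), the regular value theorem immediately yields that $J_\beta^{-1}(\mu)$ is an embedded submanifold of $TQ$ of codimension $\dim \mathfrak{g}_\beta$. To see that $\xi_{L,\beta}$ is tangent to it, it suffices to verify that $\xi_{L,\beta}(J\xi)=0$ for every $\xi\in\mathfrak{g}_\beta$; but this is the content of Lemma~\ref{lemma_vector_space}(i) combined with the defining condition $\beta(\xi_Q^c)=0$ of $\mathfrak{g}_\beta$.

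For (ii), I would check the hypotheses of the Marsden-Weinstein theorem. Since $L$ is $G$-invariant, so are $\alpha_L$ and $\omega_L$, and therefore $G_\beta \subset G$ acts on $(TQ,\omega_L)$ by symplectomorphisms. The identity~\eqref{Hamiltonian_vector_momentum_map} derived in the proof of Lemma~\ref{lemma_vector_space}, together with the $\mathrm{Ad}^*$-equivariance of the natural momentum map $J$, carries over to $J_\beta$ by restriction to $\mathfrak{g}_\beta$. Regularity of $\mu$ then produces the symplectic quotient $(M_\mu,\omega_\mu)$ with $\omega_\mu$ uniquely characterized by $\pi_\mu^*\omega_\mu=\incl_\mu^*\omega_L$.

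For (iii) and (iv), the key point is invariance followed by descent through the diagram $J_\beta^{-1}(\mu)\xrightarrow{\incl_\mu} TQ$, $J_\beta^{-1}(\mu)\xrightarrow{\pi_\mu}M_\mu$. The Lagrangian $L$ is $G$-invariant by hypothesis, hence $G_\beta$-invariant, and so descends to a function $L_\mu$ on $M_\mu$. The computation inside the proof of Lemma~\ref{lemma_vector_space} shows $\xi_Q^c(E_L)=0$ for every $\xi\in\mathfrak{g}$, so $E_L$ is $G_\beta$-invariant and descends to a function $E_{L_\mu}$ satisfying~\eqref{reduced_energy}. For $\beta$, the two defining properties of $\mathfrak{g}_\beta$ combine via Cartan's magic formula to give $\liedv{\xi_Q^c}\beta=0$ for all $\xi\in\mathfrak{g}_\beta$, so $\incl_\mu^*\beta$ is $(G_\beta)_\mu$-invariant; together with $\contr{\xi_Q^c}\beta=0$, which annihilates $\incl_\mu^*\beta$ on fundamental vector fields of the $(G_\beta)_\mu$-action, this is exactly what is needed for $\incl_\mu^*\beta$ to project to a well-defined 1-form $\beta_\mu$ on $M_\mu$. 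I expect the main technical nuisance to be the implicit reduced Liouville field underlying $E_{L_\mu}=\Delta_\mu(L_\mu)-L_\mu$: it requires checking that $\Delta$ is both $G_\beta$-invariant and tangent to $J_\beta^{-1}(\mu)$ so that $\Delta_\mu\coloneqq(\pi_\mu)_*\Delta$ makes sense, and likewise verifying the semibasic character of $\beta_\mu$ with respect to whatever fibred structure $M_\mu$ inherits.
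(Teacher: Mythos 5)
Your proposal follows essentially the same route as the paper: parts (i)--(iii) rest on the Marsden--Weinstein reduction of $(TQ,\omega_L)$ (which the paper simply cites), and the forced-specific ingredients --- tangency of $\xi_{L,\beta}$ via conservation of $J\xi$, descent of $L$, $E_L$ and $\beta$ --- are obtained exactly as in the paper from the invariance identities of Lemma~\ref{lemma_vector_space}. You are in fact slightly more careful than the paper in noting that $\incl_\mu^*\beta$ must be horizontal (annihilate the fundamental vector fields) as well as invariant in order to descend, and the one point you leave open --- the well-definedness of the reduced Liouville field $\Delta_\mu$ --- is treated just as loosely in the paper's own proof; in any case $E_{L_\mu}$ is already well defined simply by descent of the invariant function $E_L$.
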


\begin{proof}
For a proof of the first three assertions see Refs.~\cite{marsden_montgomery_ratiu_90,abraham_marsden_08,ortega_ratiu_04}.
Observe that
\begin{equation}
\begin{aligned}
\Delta(L)\circ \incl_\mu
&= \incl_\mu^*\Delta(L)
=\incl_\mu^* (\contr{\Delta}\dd L)
=\contr{\incl_{\mu*}\Delta} (\incl_\mu^*\dd L)
=(\incl_{\mu*}\Delta)(\incl_{\mu}^* L)\\
&=(\incl_{\mu*}\Delta)(L\circ \incl_{\mu})
=(\incl_{\mu*}\Delta)(L_\mu\circ \pi_{\mu})
= (\pi_{\mu*}\Delta_\mu) (L_\mu\circ \pi_{\mu})\\
&= \pi_\mu^* (\Delta_\mu(L_\mu)) 
= \Delta_\mu(L_\mu) \circ \pi_\mu
,
\end{aligned}
\end{equation}
where $\Delta$ is the Liouville vector field on $TQ$ and $\Delta_\mu$
is a $\pi_\mu$-related vector field on $M_\mu$, namely
\begin{equation}
 \pi_{\mu*}\Delta_\mu = \incl_{\mu*}\Delta.
\end{equation}
Then we can introduce a function $E_{L_\mu}:M_\mu\to \RR$, given by $E_{L_\mu}=\Delta_\mu(L_\mu)-L_\mu$, which satisfies Eq.~\eqref{reduced_energy}. Since $\beta$ is $(G_\beta)_\mu$-invariant, it induces a reduced semibasic 1-form $\beta_\mu$ on $M_\mu$.
\end{proof}

\begin{corollary}
The vector field $\xi_{L_\mu,\beta_\mu}$, defined by
 \begin{equation}
   \contr{\xi_{L_\mu,\beta_\mu}}\omega_\mu =\dd E_{L_\mu}+\beta_\mu,
 \end{equation}
 determines the dynamics on $M_\mu$. It is $\pi_\mu$-related to $\xi_{L,\beta}$.
\end{corollary}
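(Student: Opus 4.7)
The plan is to show that $\xi_{L,\beta}$ projects via $\pi_\mu$ to a well-defined vector field $Y$ on $M_\mu$, and that $Y$ satisfies the equation characterising $\xi_{L_\mu,\beta_\mu}$, which is then unique by the non-degeneracy of $\omega_\mu$.

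\textbf{Step 1: Restriction and projection.} By Theorem~\ref{theorem_reduction}~i), $\xi_{L,\beta}$ is tangent to $J_\beta^{-1}(\mu)$, so it restricts to a vector field $\bar\xi$ on this submanifold. To project $\bar\xi$ to $M_\mu=J_\beta^{-1}(\mu)/(G_\beta)_\mu$, I need to check that $\bar\xi$ is $(G_\beta)_\mu$-invariant. Since $L$ is $G$-invariant, $\omega_L$ and $E_L$ are $G$-invariant (in particular $(G_\beta)_\mu$-invariant). Moreover $\beta$ is $\mathfrak{g}_\beta$-invariant by Lemma~\ref{lemma_vector_space}, hence $(G_\beta)_\mu$-invariant. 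Consequently the defining equation $\contr{\xi_{L,\beta}}\omega_L=\dd E_L+\beta$ is $(G_\beta)_\mu$-invariant, and by non-degeneracy of $\omega_L$ the field $\xi_{L,\beta}$ itself is $(G_\beta)_\mu$-invariant. Thus its restriction $\bar\xi$ descends to a unique vector field $Y$ on $M_\mu$ with $\pi_{\mu*}\bar\xi=Y$, i.e.\ $Y$ is $\pi_\mu$-related to $\xi_{L,\beta}$.

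\textbf{Step 2: Verifying the defining equation.} Pulling back $\contr{Y}\omega_\mu$ by $\pi_\mu$ and using $\pi_{\mu*}\bar\xi=Y$ together with the three relations of Theorem~\ref{theorem_reduction}, I compute
\begin{equation}
\pi_\mu^*(\contr{Y}\omega_\mu)=\contr{\bar\xi}(\pi_\mu^*\omega_\mu)=\contr{\bar\xi}(\incl_\mu^*\omega_L)=\incl_\mu^*(\contr{\xi_{L,\beta}}\omega_L)=\incl_\mu^*(\dd E_L+\beta).
\end{equation}
Using $\incl_\mu^* E_L=\pi_\mu^* E_{L_\mu}$ and $\incl_\mu^*\beta=\pi_\mu^*\beta_\mu$, the right-hand side equals $\pi_\mu^*(\dd E_{L_\mu}+\beta_\mu)$. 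Since $\pi_\mu$ is a surjective submersion, $\pi_\mu^*$ is injective on forms on $M_\mu$, hence
\begin{equation}
\contr{Y}\omega_\mu=\dd E_{L_\mu}+\beta_\mu.
\end{equation}
As $\omega_\mu$ is symplectic, this equation has a unique solution, so $Y=\xi_{L_\mu,\beta_\mu}$.

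\textbf{Main obstacle.} The essential point is the projectability of $\xi_{L,\beta}$, which rests on the $(G_\beta)_\mu$-invariance of $\beta$; this is exactly what Lemma~\ref{lemma_vector_space} and the construction of $\mathfrak{g}_\beta$ were designed to secure. Once projectability is in hand, the verification of the reduced equation is a routine chase through the three pull-back identities supplied by Theorem~\ref{theorem_reduction}, together with the injectivity of $\pi_\mu^*$ on forms.
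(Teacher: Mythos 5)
Your proof is correct, and since the paper states this corollary without proof (deferring implicitly to the standard Marsden--Weinstein machinery cited in Theorem~\ref{theorem_reduction}), your argument supplies exactly the expected reasoning: projectability of $\xi_{L,\beta}$ via the $(G_\beta)_\mu$-invariance of $\omega_L$, $E_L$ and $\beta$, followed by the pull-back computation and injectivity of $\pi_\mu^*$ on forms. The only point worth noting is that Lemma~\ref{lemma_vector_space} gives infinitesimal ($\mathfrak{g}_\beta$-) invariance of $\beta$, which yields $(G_\beta)_\mu$-invariance because $G_\beta$ is by construction the connected subgroup generated by $\mathfrak{g}_\beta$ — a gloss the paper itself also omits.
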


\begin{remark}\label{rayleight_reduction}
In the Rayleigh dissipation case $\beta = S^*(\dd \mathcal{R})$, according to Lemma~\ref{lemma_rayleigh_potential}, one can equivalently define $\mathfrak{g}_\beta\equiv \mathfrak{g}_\mathcal{R} $ as
\begin{equation}
     \mathfrak{g}_\mathcal{R} = \left\{\xi \in \mathfrak{g} \mid
    \xi_Q^v(\mathcal{R}) = 0, S^* \dd \xi_Q^c(\mathcal{R}) = 0
    \right\},
\end{equation}
where the condition $S^* \dd \xi_Q^c(\mathcal{R}) = 0$ means that $\xi_Q^c(\mathcal{R})$ is basic: it does not depend on $\dot{q}^i$. If additionally $\mathcal{R}$ is $\mathfrak{g}_\mathcal{R}$-invariant, i.e., $\xi_Q^c(\mathcal{R})=0$ for each $\xi\in \mathfrak{g}_\mathcal{R}$, then it induces a dissipation function $\mathcal{R}_\mu:M_\mu\to \RR$ given by
\begin{equation}
  \mathcal{R}_\mu \circ \pi_\mu = \mathcal{R} \circ \incl_\mu.
\end{equation}
\end{remark}

\begin{remark}[Reconstruction of the dynamics]
Knowing the integral curves on $M_\mu$, we want to obtain the integral curves on $J^{-1}(\mu)$.
Let $c(t)$ and $[c(t)]$ be the integral curves of $\xi_{L,\beta}$ and $\xi_{L_\mu,\beta_\mu}$, respectively, with $c(0)=p_0$. Let $d(t)\in J^{-1}(\mu)$ be a smooth curve such that $d(0)=p_0$ and $[d(t)]=[c(t)]$. We can write
\begin{equation}
  c(t)=\Phi_{g(t)}(d(t)),\label{integral_curve}
\end{equation}
for $g(t)\in (G_\beta)_\mu$. Then we have to find $g(t)$ in order to express $c(t)$ in terms of $[c(t)]$. Now \cite{abraham_marsden_08}
\begin{equation}
\begin{aligned}
  \xi_{L,\beta}(c(t))
  &=c'(t)\\
  &=T\Phi_{g(t)}(d(t))(d'(t)) + T\Phi_{g(t)}(d(t))(TL_{g(t)^{-1}}(g'(t)))_Q^c(d(t)),
\end{aligned}
\end{equation}
and using the $\Phi_g$-invariance of $\xi_{L,\beta}$ one gets
\begin{equation}
  \xi_{L,\beta}(d(t))
  =d'(t)+(TL_{g(t)^{-1}}(g'(t)))_Q^c(d(t)).
\end{equation}
In order to solve this equation, we first solve the algebraic problem
\begin{equation}
  \xi_Q^c(d(t))=\xi_{L,\beta}(d(t))-d'(t),
\end{equation}
for $\xi(t)\in\mathfrak{g}_\beta$, and then solve
\begin{equation}
  g'(t)=TL_{g(t)}\xi(t)
\end{equation}
for $g(t)$. The integral curve sought is given by Eq.~\eqref{integral_curve}.
\end{remark}

\begin{example}[Angular momentum]
    Consider $Q=\RR^n\setminus \{0\}$, a Lagrangian function $L$ on $T Q$ that is spherically symmetric, say $L(q,\dot{q})= L(\lVert q \rVert,\lVert{\dot{q}}\rVert)$. 
    Consider the Lie group $G=\mathrm{SO}(n) = \{O \in \RR^{n\times n} \mid {O}^t O = \Id, \det(O)=1 \}$ acting by rotations on $Q$. The action can be lifted to $TQ$ via the tangent lift. Explicitly, for $O\in \mathrm{SO}(3)$ we let
    \begin{equation}
        \begin{aligned}
            g_O :  Q  &\to  Q, \\
            (q) &\mapsto (O \cdot q)\\
            Tg_O : T Q  &\to T Q, \\
            (q,\dot{q}) &\mapsto (O \cdot q, O \cdot \dot{q}).
        \end{aligned}
    \end{equation}
The group $\mathrm{SO}(n)$ acts freely and properly. The Lie algebra of the group is given by $\mathfrak{g}=\mathfrak{so}(n) = \{o \in \RR^{n\times n} \mid {o}^t + o = 0\}$.
In the case $n=3$ this algebra $\mathfrak{g}$ will be identified with the algebra of three dimensional vectors with the cross product by taking
\begin{equation}
    \begin{pmatrix}
        0 & -\xi_3 & \xi_2\\
        \xi_3 & 0 & -\xi_1\\
        -\xi_2 & \xi_1 & 0\\
    \end{pmatrix}
    \mapsto 
    \begin{pmatrix}
        \xi_1 \\ \xi_2 \\ \xi_3
    \end{pmatrix}.
\end{equation}
The infinitesimal generator of $\xi\in \mathfrak{g}$ is given by 
\begin{equation}
\begin{split}
    \xi_Q (q) &= (\xi \times q),\\
    \xi_Q^c(q,\dot{q}) &= (\xi \times q,\xi \times \dot{q}),\\
    \xi_Q^v(q,\dot{q}) &= (0,\xi \times \dot{q}).
\end{split}
\end{equation}
One can identify $\mathfrak{g}$ with $\mathfrak{g}^*$ by using the inner product on $\RR^3$. The moment map is then given by \cite[Example~4.2.15]{abraham_marsden_08}
\begin{equation}
    J(q,\dot{q}) = q \times \dot{q}. 
\end{equation}
Identifying $\mathfrak{g}^*\simeq \RR^3$ one sees that the coadjoint actions of $G$ is the usual one (by rotations). Let $\mu \in \mathfrak{g}^*$, $\mu\neq 0$ then the isotropy group $G_\mu \simeq S^1$ of $\mu$ under the coadjoint action, which are the rotations around the axis $\mu$.

We look for Rayleigh potentials $\mathcal{R}$ such that $\mathfrak{g}_\mathcal{R} = \mathfrak{g}$. The condition that $\xi_Q^v (\mathcal{R})=0$ implies that $\mathcal{R}$ is spherically symmetric on the velocities. Then, the condition that $\xi_Q^c(q,\dot{q})$ is semi-basic means that the terms which are not spherically symmetric on the positions cannot involve the velocities. That is:
\begin{equation}
    \mathcal{R} = A(q) + B(\lVert{q}\rVert, \lVert \dot{q} \rVert).
\end{equation}
Without loss of generality, one can take $\mu=(0,0,\mu_0)$. Hence, if $(q,v) \in J^{-1}(\mu)$, both $q$ and $\dot{q}$ lie on the $xy$-plane. Moreover, they must satisfy the equation $\dot{q}^1 p_2-\dot{q}^1 q^2 = \mu_0$. We can finally apply Theorem~\ref{theorem_reduction}, to our system and find out that $(M_\mu, L_\mu)$, which is a Hamiltonian system over a $2$-dimensional manifold~\cite[Example~4.3.4]{abraham_marsden_08}.
\end{example}

\section*{Acknowledgements}
The authors acknowledge financial support from the Spanish Ministry of Science and Innovation (MICINN), under grants PID2019-106715GB-C21, ``Severo Ochoa Programme for Centres of Excellence in R\&D'' (CEX2019-000904-S) and from the Spanish National Research Council (CSIC), through the ``Ayuda extraordinaria a Centros de Excelencia Severo Ochoa'' (20205-CEX001). Manuel Laínz wishes to thank MICINN and the Institute of Mathematical Sciences (ICMAT) for the FPI-Severo
Ochoa predoctoral contract PRE2018-083203. 
Asier López-Gordón would like to thank CSIC for its financial support through the JAE Intro scholarship JAEINT\_20\_01494. He would also like to acknowledge a renewal of the scholarship by ICMAT. The constructive corrections of the referee are also appreciated.

\section*{Data Availability}
Data sharing is not applicable to this article as no new data were created or analyzed in this study.


\printbibliography
\end{document}